\newtheorem{theorem}{Theorem}
\newtheorem{lemma}{Lemma}
\newtheorem{problem}{Problem}
\newtheorem{definition}{Definition}
\begin{document}
%
\title{Autonomous Vehicle Public Transportation System: Scheduling and Admission Control}

\author{Albert Y.S. Lam,
        Yiu-Wing Leung, and
        Xiaowen Chu
\thanks{A preliminary version of this paper was presented in \cite{confversion}.}
\thanks{A.Y.S. Lam is with the Department of Electrical and Electronic Engineering, The University of Hong Kong, Pokfulam,
Hong Kong (e-mail: ayslam@eee.hku.hk).}
\thanks{Y.-W Leung and X. Chu are with the Department
of Computer Science, Hong Kong Baptist University, Kowloon Tong,
Hong Kong (e-mail: \{ywleung, chxw\}@comp.hkbu.edu.hk).}
}

%


\maketitle

\begin{abstract}
Technology of autonomous vehicles (AVs) is getting mature and many AVs will appear on the roads in the near future. AVs become connected with the support of various vehicular communication technologies and they possess high degree of control to respond to instantaneous situations cooperatively with high efficiency and flexibility. In this paper, we propose a new public  transportation system based on AVs. It manages a fleet of AVs to accommodate transportation requests, offering point-to-point services with ride sharing. We focus on the two major problems of the system: scheduling and admission control. The former is to configure the most economical schedules and routes for the AVs to satisfy the admissible requests while the latter is to determine the set of admissible requests among all requests to produce maximum profit. The scheduling problem is formulated as a mixed-integer linear program and the admission control problem is cast as a bilevel optimization, which embeds the scheduling problem as the major constraint. By utilizing the analytical properties of the problem, we develop an effective genetic-algorithm-based method to tackle the admission control problem. We validate the performance of the algorithm with real-world transportation service data.

\end{abstract}

\begin{IEEEkeywords}
Autonomous vehicle, admission control, bilevel optimization, smart city.
\end{IEEEkeywords}

\IEEEpeerreviewmaketitle

\section{Introduction}

\IEEEPARstart{H}{uman} 
mobility is largely supported by public \textcolor{black}{transport}. Many  people rely on public \textcolor{black}{transport} to move from one place to another when the destinations of their journeys are not within walkable distances. 
To transform a city with limited room for large-scale infrastructure into a smart city, its public transportation system may \textcolor{black}{need to} be further upgraded mainly from the existing road networks.  
Representatives of road-based public transport are buses and taxis, each \textcolor{black}{type} of which has its pros and cons. In general, buses follow fixed routes offering shared ride so that more passengers can be served on each \textcolor{black}{single journey}. On the other hand,  taxis offer private services and  run on flexible dedicated routes based on the passengers' requests. Nevertheless, no single one type can support high throughput and flexibility at the same time.
The efficiency and capacity of the whole public transportation system may be enhanced if there exists a new public transport which can accommodate many people in a short period of time and concur high mobility. 
 It may maintain flexibility by offering point-to-point services while enhancing efficiency by supporting shared ride. 
Such kind of public transport requires several characteristics which may not be possessed by a typical public \textcolor{black}{transport}. To develop such a public transport, the vehicles need to cooperate to take up customers' requests instead of cruising around the city for random offers. To enhance the efficiency and cooperativeness, a control center can be employed to coordinate all the vehicles, manage all the service requests, and assign the vehicles to serve the requests. Moreover, the vehicles should follow the routes and carry out the travel plans instructed so as to achieve system-wise objectives.
Recently, autonomous vehicles (AVs) have been undergone active research and we can expect many AVs running on the roads in the near future. The AV is a good candidate possessing most of the requirements mentioned above. Hence AVs can be adopted to \textcolor{black}{construct} a new smart public transportation system with high efficiency and flexibility.

In this paper, we introduce an intelligent \textcolor{black}{AV-based} public transportation system. \textcolor{black}{It manages a fleet of AVs to accommodate transportation requests, offering point-to-point services with ride sharing.}  
We focus on two important problems in the system: \textit{scheduling} and \textit{admission control}. The former is about how to assign the designated vehicles to the admissible transportation requests, and when and where the vehicles should reach to provide services \textcolor{black}{with the lowest cost}. The latter is to determine the set of admissible requests among all requests to achieve maximum revenue.  
As a whole, the contributions of this paper include:
\begin{itemize}
	\item \textcolor{black}{proposing the} AV public transportation system;
	\item improving the model for scheduling proposed in \cite{confversion}, such that the formulation developed in this paper can now support both directed  and undirected graphs;
	\item developing distributed scheduling;
	\item formulating the admission control problem;
	\item introducing the concept of admissibility and the related analytical results;
	\item \textcolor{black}{designing} an effective method to solve the admission control problem; and
	\item validating the performance of the solution method with real-world transportation service data.
\end{itemize}
The rest of this paper is organized as follows. \textcolor{black}{Related work is given in Section \ref{sec:related} and we} present various system components and their operations in Section \ref{sec:model}. The scheduling problem is discussed in Section \ref{sec:scheduling}. In Section \ref{sec:admission}, we formulate the admission control problem and provide the related analytical results. We propose a genetic-algorithm-based solution method for admission control and develop distributed scheduling in Section \ref{sec:algorithm}. Section \ref{sec:simulation} evaluates the system performance with real-world transportation service data. Finally we conclude this paper in Section \ref{sec:conclusion}.

\section{Related Work} \label{sec:related}

The concept of AVs was raised in the 1920's and the research thereof has started for more than thirty years. 
An AV is equipped with many sensors, 
which provide the vehicle with full sensing ability so as to adapt to the neighborhood environment and realize fully automated control.
In 2007, the DARPA Urban Challenge boosted the awareness of AVs capable of being driven in traffic and performing complex maneuvers \cite{darpa}. In 2010, VisLab carried out the experiment that several driverless vehicles successfully traveled 13,000 km from Italy to China \cite{italychina}. Google demonstrated an AV prototype in 2011 \cite{google}. 
By the end of 2013, several states in the \textcolor{black}{United States}, including Nevada, Florida, California, and Michigan, had passed the law to allow AVs running on public roads \cite{AVlegal}.
The first self-driving shuttle on sale was from NAVIA \cite{NAVIA}.
Other automotive manufacturers, like \textcolor{black}{Mercedes-Benz} \cite{benz}, BMW, and Audi \cite{BMW},  have invested in self-driving technologies and include  AVs in their  production plans. 

Most research \textcolor{black}{work} on AVs mainly focused on the control and communication aspects. Mladenovic and Abbas \cite{controlFramework} proposed a self-organizing and cooperative control framework for distributed vehicle intelligence. 
Hu \textit{et al.} \cite{assignment}  studied lane assignment strategies for connected AVs and proposed a lane changing maneuver to balance the tradeoff between efficiency and safety. 
Petrov and Nashashibi \cite{feedback} developed a feedback controller for autonomous overtaking without utilizing roadway marking and inter-vehicle communication. 
Li \textit{et al.} \cite {multilevel} presented a multi-level fusion-based road detection system for driverless vehicle navigation to ensure safety in various road conditions. 
All these show that AV is a promising technology with the support from governments, high-tech companies, and car manufacturers.

\textcolor{black}{Vehicles can communicate with each other and fixed infrastructure via various vehicular wireless communication techniques \cite{vehcomm}. Nowadays vehicular communications are mostly deployed over satellite, cellular networks, and vehicular ad-hoc networks (VANETs) \cite{vehcomm}.}
VANET is a mobile ad-hoc network where vehicles act as the mobile nodes \cite{vanet} and it can improve the communication capacity and organization of AVs constituting an intelligent transportation system. 
Furda \textit{et al.} \cite{wireless} introduced a wireless communication framework for driverless vehicles. 
It facilitated  vehicle-to-vehicle and vehicle-to-infrastructure communications and improved the safety and efficiency of  vehicles. 
Alsabaan \textit{et al.} \cite{v2v} made use of traffic light signals and vehicle-to-vehicle (V2V) communications to help vehicles adapt their speeds and avoid unnecessary stop, acceleration, and excessive speed. 
Gomes \textit{et al.} \cite{video} designed a driver-assistance system which allowed a vehicle to collect real-time camera images from other vehicles in the neighborhood over V2V communications. 
In this way, AVs become connected and can communicate with the control center.

Shareability of taxi services has been studied recently. Santi \textit{et al.} \cite{taxipooling} investigated the tradeoff between passenger inconvenience and collective benefits of sharing and concluded that a small increase in discomfort could induce the significant \textcolor{black}{benefits} of less congestion, less running costs, less split fares, less polluted, and cleaner environment.
Ma \textit{et al.} proposed a taxi ridesharing system called T-Share in  \cite{t-share}, where the dynamic taxi ridesharing problem was studied. For a dataset of taxi services in Beijing, it showed that 25\% additional taxi users could be served with saving of 13\% of total travel distance.
These studies confirmed that ridesharing is beneficial but they mostly focused on taxi services.
In this paper, we focus on \textcolor{black}{AVs}, which \textcolor{black}{have} a key intrinsic property hardly found in the standard \textcolor{black}{taxis}: the direct control of vehicles does not involve any human factors. In other words, AVs can completely follow the instructions from the control center in the sense that they neither undertake any unassigned requests nor reject any assigned requests. We can see that AVs can fully cooperate to achieve the system objective but it may not be the case for \textcolor{black}{human-driving} taxis.

\begin{table}[!t]
\renewcommand{\arraystretch}{1.3}
\caption{Contributions to the system.}
\label{tab:related}
\centering
\textcolor{black}{
\begin{tabular}{p{1.5cm}|p{1.8cm}|p{3.2cm}|c }
\hline\hline
Technology/	& \multirow{2}{*}{Example} 	&	\multirow{2}{*}{Contributions}	& \multirow{2}{*}{Ref.}	\\ 
feature			&														&													&												\\\hline
\multirow{5}{*}{Hardware}	& VisLab 											&	Demonstrate the feasibility of AVs	& \cite{italychina}\\
					& Google 											&	Show the confidence of the industry in AVs			& \cite{google}\\
					& Mercedes-Benz, BMW, Audi, NAVIA					&	Guarantee supply of AVs for the system 								& \cite{benz,BMW,NAVIA}				\\ \hline
\multirow{6}{*}{Software}	&	Mladenovic \& Abbas 				&	Enhance self-organizing and cooperative control of AVs												&\cite{controlFramework}\\
					& Hu \textit{et al.} 					&	Balance the efficiency and safety of AVs													&\cite{assignment}			\\
					& Petrov \& Nashashibi 				&	Enhance self-control  of AVs 													&	\cite{feedback}				\\
					& Li \textit{et al.} 					&	Improve safety of AVs												&\cite {multilevel}			\\ \hline
\multirow{3}{*}{Law}				& Nevada, Florida, California, and Michigan	&	Demonstrate the support of governments	& \cite{AVlegal}\\ \hline
				& Cottingham						& 	Introduce the vehicular wireless communications available to be used in the system	& \cite{vehcomm}				\\
								& Dahiya \& Chauhan			&		Improve the communication capacity and organization of AVs										& \cite{vanet}					\\
Communications					& Furda \textit{et al.}	&	Enhance the communications between AVs and the control center												& \cite{wireless}				\\
								& Alsabaan \textit{et al.}	&	Improve the comfort of AVs										& \cite{v2v} \\
								& Gomes \textit{et al.}		&	Collect data for the system to estimate traffic conditions											&	\cite{video}\\ \hline
\multirow{2}{*}{Ridesharing}			& Santi \textit{et al.}	&		\multirow{2}{3.2cm}{Confirm the ridesharing functionality of the system}											& \cite{taxipooling}\\
								& Ma \textit{et al.}		&													& \cite{t-share}\\\hline
\multirow{2}{1.5cm}{AV public transportation system}		& Lam \textit{et al.}	& Provide a proof of concept	& \cite{confversion} \\
																										& Lam \textit{et al.}	& Investigate the scheduling and admission control problems	& This work\\
\hline\hline
\end{tabular}
}
\end{table}

The AV public transportation system is uniquely designed and it can help improve the capacity and flexibility of the future transportation system. To further demonstrate its feasibility, we show how the existing work discussed above may contribute to the system in Table \ref{tab:related}.

The scheduling problem has been introduced in \cite{confversion} and it can be considered as a variant of the Dial-A-Ride Problem (DARP) \cite{DARP}.  However, in our AV scheduling problem, we allow modifying the previously assigned but not yet served requests at desirable times to achieve system-wise performance goal. When the system evolves, the AVs appear at different locations at different time instants. It may happen that a particular request can be better served by a different AV at different times. \textcolor{black}{Consider an example with two AVs, I and II. At a paricular time, AV-I  is in the neighborhood of a location while AV-II is not. A request originated from this location may be better served by AV-I. After some time, AV-I may have gone away but AV-II may have come into the neighborhood. Then the request may be better served by AV-II instead.} As the AVs are connected \textcolor{black}{through appropriate vehicular communication technologies}, the schedules of AVs can be revised from time to time. We \textcolor{black}{consider} this in our formulation \textcolor{black}{making} our scheduling problem different from DARP.
\textcolor{black}{As the system involves a number of AVs, determining their schedules in a distributed manner can undoubtedly speed up the process. Distributed scheduling has been advanced in many engineering disciplines, e.g.,  communication networks \cite{distributedscheduling1,distributedscheduling2}. As a new system, we will dedicatedly design a distributed  methodology for the scheduling thereof.}

\textcolor{black}{Admission control generally refers to a validation process in communication systems for quality-of-service assurance. It determines which new connection or service request can be granted with resources for subsequent operations.  For example, \cite{admission1} designed an admission control mehanism to add or drop session requests in 4G wireless networks and \cite{admission2} discussed various admission control algorithms for multi-service IP networks. We adopt this idea in the transportation system and design an admission control mechanism to differentiate the transportation service requests for maximizing the total profit.}
\textcolor{black}{There are many methods to facilitate admission control. Genetic Algorithm (GA) is one of them and it has been successfully utilized to design admission control mechanisms, e.g., \cite{admission_algo1} and \cite{admission_algo2}. Based on the special formulation of the admission control problem (to be discussed in Section \ref{sec:admission}), we will also adopt GA to solve the problem.}
\section{System Model}\label{sec:model}

In this section, we design the architecture for the system which can manage a fleet of AVs to serve customers for transportation services. In the following, we first introduce the system components and then describe the operations  characterizing their interactions.

\subsection{System Components}

\subsubsection{Network Structure}
A graph is employed to model the region being served by the system. It characterizes the locations and the road connections necessarily to describe movements of the AVs, origins and destinations of the service requests, and other required facilities. It is  a directed graph denoted by $G(\mathcal{V},\mathcal{E})$, where $\mathcal{V}$ is a set of locations and $\mathcal{E}$ refers to the road segments connecting the locations so that we can completely describe the routes of AVs \textcolor{black}{with} $G$.  
For $i,j\in \mathcal{V}$, each edge $(i,j)\in\mathcal{E}$ is associated with an operational cost $c_{ij}$ and a travel time $t_{ij}$, which is an estimation of time for an AV to traverse from $i$ to $j$ based on historical data. Depended on the system objective, $c_{ij}$ typically represents the distance of the road segment $(i,j)$ as the operational cost of AVs  is usually measured by the fuel consumption which is in turn characterized by the travel distance. If the system aims to optimize the total service duration, we can set $c_{ij}=t_{ij}$ for all $(i,j)$'s. We allow $c_{ij}\neq c_{ji}$ and $t_{ij}\neq t_{ji}$ to account for the asymmetry of road segments. Moreover, refuel stations are located in some locations specified by $\tilde{\mathcal{V}}\subset \mathcal{V}$ and each AV  ends its journey at any one of these refuel stations (reasons explained in Section \ref{subsubsec:breakdown}). Based on the nature of the AVs, $\tilde{\mathcal{V}}\subset \mathcal{V}$ will be the locations of charging (gas) stations if the AVs are electric (conventional) vehicles. For the case of electric vehicles, $\tilde{\mathcal{V}}\subset \mathcal{V}$ can be determined based on the charging demand and the connectivity of the charging station network according to \cite{EVCPP}.

\subsubsection{Transportation Requests}
Customers request services in the form of transportation requests, which are collectively denoted by $\mathcal{R}$. Each $r\in\mathcal{R}$ is represented by the 5-tuple $\langle s_r, d_r, T_r, [e_r, l_r], q_r \rangle$. $s_r\in \mathcal{V}$ and $d_r\in \mathcal{V}$ represent the customer pickup and dropoff locations, respectively. $T_r$ is the maximum ride time, an exceedance of which will lead to customer dissatisfaction. $[e_r, l_r]$ refers to the service starting time window, where  $e_r$ and $l_r$ are the earliest and latest service starting times, respectively. $q_r$ stands for the number of seats \textcolor{black}{needed in the request $r$}.

\subsubsection{Vehicles}
The system coordinates a fleet of AVs denoted by $\mathcal{K}$. Each $k\in \mathcal{K}$ is represented by the 5-tuple $\langle a_k,t_k^0,\tilde{T}_k, Q_k, \mathcal{R}_k\rangle$. $a_k\in\mathcal{V}$ is the first location where $k$ will visit from the current position of $k$ while $t^0_k$ is the time required to reach $a_k$ from its current position.  It is possible that, at the time of scheduling, the AV is in the middle of a road segment heading to $a_k$. $a_k$ and $t^0_k$ can be easily estimated by submitting its current position to the system. $\tilde{T}_k$ denotes the maximum remaining operation time that $k$ can continue to provide services without refueling.\footnote{\textcolor{black}{The maximum remaining operation time of $k$ can be converted from its corresponding remaining fuel level.}} $Q_k$ is the passenger capacity that $k$ can accommodate simultaneously. $\mathcal{R}_k= \tilde{\mathcal{R}}_k\cup \overline{\mathcal{R}}_k \in \mathcal{R}$ is the set of requests previously assigned to $k$. $\mathcal{R}_k$ can be further categorized into two types; $\tilde{\mathcal{R}}_k$ contains those  currently being served by $k$ while $\overline{\mathcal{R}}_k$ was assigned to $k$ at a previous schedule but the services have not been implemented yet. For the former, some seats have already been taken by the customers from $\tilde{\mathcal{R}}_k$. On the contrary, seats have only been reserved but no actual seats \textcolor{black}{have been} taken from $\overline{\mathcal{R}}_k$. We will handle $\tilde{\mathcal{R}}_k$ and $\overline{\mathcal{R}}_k$ differently when performing scheduling in Section \ref{sec:scheduling}.

Without loss of generality, we assume that the number of seats required in any request is no larger than the capacity of any vehicle, i.e., 
\begin{align}
q_r\leq Q_k,\forall r\in\mathcal{R}, k\in\mathcal{K}.  \label{capCondition}
\end{align}
We can always split those requests violating \eqref{capCondition} into multiple requests so that this condition always holds.

\subsection{Operations} \label{subsec:op}

\begin{figure}[!t]
\hspace{-0.3cm}
\includegraphics[width=3.8in]{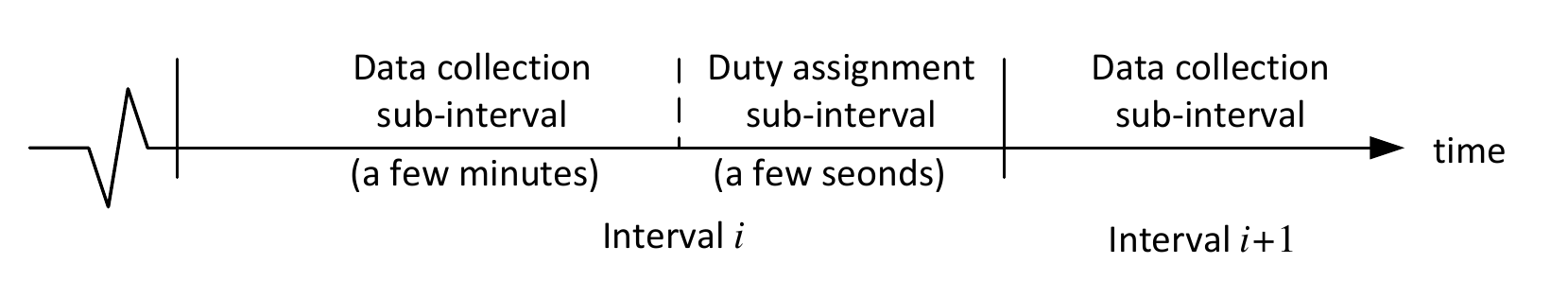} \vspace{-0.5cm}
\caption{Operating intervals in the system.} 
\label{fig:interval}
\end{figure}

\begin{figure}[!t]
\centering
\includegraphics[width=3.2in]{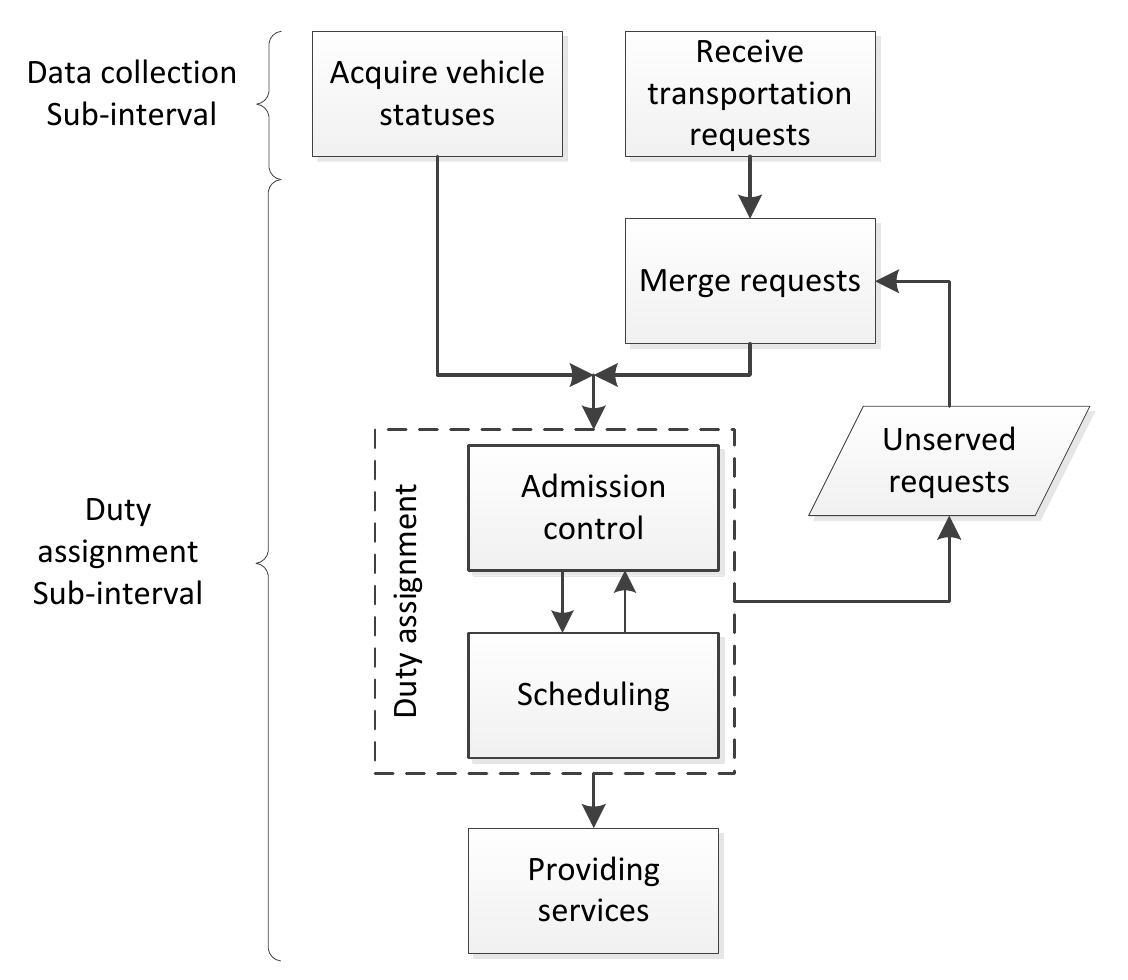} \vspace{-0.5cm}
\caption{Operation flow of the system.} 
\label{fig:operations}
\end{figure}

The system is managed and operated by a control center whose main duties  are to collect all the required information and assign the AVs to serve the transportation requests. 
The system operates in a fixed time interval basis and each time interval is divided into data collection and duty assignment sub-intervals (see Fig. \ref{fig:interval}). In each interval, the control center first collects transportation requests and vehicle statuses in the data collection sub-interval. Then the AVs are assigned to \textcolor{black}{serve} the transportation requests in the duty assignment sub-interval.
On one hand, the duration of each interval should be long enough such that the communication delays will not result in  any data missing from the customers and vehicles for scheduling. On the other hand, it should be short enough such that the collected data can reflect the current situation happening in that interval.
In practice, the data collection sub-interval is longer than the duty assignment one. The former may last for a few minutes while the latter may takes a few seconds.

Fig. \ref{fig:operations} illustrates the operation flow of the system \textcolor{black}{with respect to an operating interval}. As powered by \textcolor{black}{various wireless} vehicular \textcolor{black}{communication} technologies, all AVs are connected and can communicate with the control center \textcolor{black}{instantaneously}. In this way, the control center \textcolor{black}{can collect} the necessary vehicle statuses, e.g., current locations of AVs, confirmation of serving requests, traffic congestion information, etc.\textcolor{black}{, in the data collection sub-interval.} Customers \textcolor{black}{can} also submit their requests to the control center by any appropriate means, e.g., phone calls, mobile apps, etc. After the data collection sub-interval, all the data required to perform duty assignment are ready \textcolor{black}{at the control center}.

In the duty assignment sub-interval, the control center processes the collected data and computes the duty assignment.
There may exist some unattended requests incurred from some previous intervals because of their unsuitability in the \textcolor{black}{previous} system conditions. They are merged with the newly \textcolor{black}{submitted} requests and then all these requests are considered \textit{en masse}. 
The duty assignment further consists of two processes: admission control and scheduling. Admission control checks all the outstanding requests and determines which requests are going to be admitted in the current interval. The unadmitted requests will be reserved for consideration in the next interval again.
Any invalid or inappropriate requests are also permanently excluded in the admission control \textcolor{black}{process}. We compute the travel schedules of the AVs to serve the admitted requests in the scheduling process. 
If a vehicle is assigned with a request, its schedule settled by the control center needs to satisfy the following requirements:
\subsubsection{Complete route specification} \label{subsubsec:breakdown}
Since the vehicle is unmanned, we need to specify the \textit{exact} route so that the vehicle can follow the route to pick the passengers of the assigned requests up and to drop them off at the required destinations. Moreover, the route should be short enough so that it has sufficient fuel to complete the route. The vehicle should end up at a refuel station to avoid breaking down in the middle of any road segments. This can guarantee that the vehicle must be able to refuel after completing all the assigned services.
\subsubsection{Time constraints}
The vehicle should be able to pick the passengers up at \textcolor{black}{a} time within the service starting time window specified in the request. Moreover, the actual ride time should be no longer than the maximum value stated in the request.
\subsubsection{Capacity constraints}
When the vehicle arrives at the pickup location, there should always be enough free seats available to accommodate all the passengers of the request.

Admission control and scheduling are inter-related and we will discuss their details in the subsequent sections.
After determining the result, the \textcolor{black}{control} center then distributes the assignments to the corresponding AVs, which provide services to the customers.

\section{Scheduling}\label{sec:scheduling}
Scheduling involves determining the following:
\begin{itemize}
	\item the assignment of AVs to the requests;
	\item the routes of AVs to accomplish the assigned requests; and
	\item the times by which the AVs should reach particular locations.
\end{itemize}
\textcolor{black}{Here we assume that all requests being scheduled are admissible, where the admittability of a request is handled by admission control.} \textcolor{black}{Thus all requests will be served by appropriate vehicles after scheduling.} When discussing admission control in Section \ref{sec:admission}, we will explain the relationship between admission control and scheduling. 

To facilitate scheduling, we assume that all vehicles are connected and can communicate with the control center with reasonably short delays.
This ensures that no apparent changes in positions happen to the AVs  
in each interval given in Fig. \ref{fig:interval}. With the support of \textcolor{black}{modern} advanced communication technologies, this assumption can go through.
In our model, we require that the computation of scheduling can be done in a short period of time. This ensures the validity of the traffic data when the vehicles traverse along their assigned routes. There are basically two types of traffic data: the distances and travel times of road segments. The former is time-invariant while the latter usually changes gradually. In other words, significant changes in travel times only take place in a timespan much longer than the time interval.

\subsection{Preprocessing} \label{subsec:preprocessing}
We schedule  the AVs to accomplish the transportation requests to achieve  the minimum total operational cost in terms of fuel costs, which are in turn measured by the total distance traveled. The distance between any pair of locations is invariant and we transform $G(\mathcal{V},\mathcal{E})$ to $G'(\mathcal{V}',\mathcal{E}')$ with any shortest path algorithm, e.g., Dijkstra's algorithm \cite{dijkstra}, where $\mathcal{V}'\subset \mathcal{V}$ is the set of locations at which we need to determine the arrival times of the assigned AVs in order to configure their travel schedules. $\mathcal{V}'$ includes the first locations visited by all the vehicles (i.e., $a_k$'s), the sources and destinations of the requests (i.e., $s_r$'s and $d_r$'s, respectively), and the locations of the refuel stations (i.e., $i\in \tilde{\mathcal{V}}$). $\mathcal{E}'$ is defined as $\{(i,j)|i,j\in\mathcal{V}' \}$ such that there exists a shortest path from $i\in \mathcal{V}$ to $j\in \mathcal{V}$ in $G$. For $(i,j)\in \mathcal{E}'$, the associated $c_{ij}$ and $t_{ij}$ are the sums of costs and times, respectively, of all the edges constituting the corresponding shortest path in $G$. 
In the subsequent computation, we focus on $G'(\mathcal{V}',\mathcal{E}')$ instead of $G(\mathcal{V},\mathcal{E})$.
The reasons why we adopt this transformation are two-fold: 
First, the number of variables needed in the formulation can be dramatically reduced. The set $\mathcal{V}\setminus \mathcal{V}'$ are not important as all conditions confining to the locations  specified by the vehicles and requests are restricted to $\mathcal{V}'$ only. In this way, the efficiency of solving the scheduling problem can be improved significantly. 
Second, this can improve the flexibility of the schedules. Consider that AV $k$ goes from vertices 1 to 4 and there exist two paths connecting them as, Path 1: $1\rightarrow 2\rightarrow 4$, and Path 2: $1\rightarrow 3\rightarrow 4$.  Suppose that vertices 1 and 4 belong to $\mathcal{V}'$ but vertices 2 and 3 do not. To satisfy the requirements imposed on $k$, we need to determine the times by which $k$ should arrive at vertices 1 and 4 only, i.e., $t_1^k$ and $t_4^k$.
If vertices 2 and 3 are also included in the formulation and Path 1 is finally chosen, $t_2^k$ will be specified by solving the scheduling problem and thus $k$ needs to arrive at the vertices by $t_1^k$, $t_2^k$, and $t_4^k$, respectively. If not, only $t_1^k$ and $t_4^k$ are specified and we can give flexibility to $k$ of arriving at vertex 2. $t_2^k$ can be any time between $t_1^k$ and $t_4^k$ as long as the required travel times spent on $(1,2)$ and $(2,4)$ have been considered. This flexibility gives room for $k$ to respond to any instantaneous traffic incidents which may disturb its original travel plan. This also allows $k$ to change to Path 2, if needed, without altering the original travel plan.

\textcolor{black}{Note that the preprocessing step can be skipped if the scheduling problem constructed directly from $G(\mathcal{V},\mathcal{E})$ can be solved efficiently. However, if the preprocessing is required to simplifiy the scheduling problem, it can be considered as a number of result lookups. As $c_{ij}$'s generally refer to the travel distances which are invariant, the results of the shortest path computations are also invariant. In fact, before the system operates, we can first compute the shortest path for every pair of locations in $\mathcal{V}$. When the preprocessing is triggered in an interval, we just need to look up the pre-computed shortest path results. Hence, the time cost of preprocessing can be considered negligibly small.}

\subsection{Problem Formulation}
We formulate the scheduling problem based on $G'(\mathcal{V}',\mathcal{E}')$.
The given data for the problem parameters include the graph $G'(\mathcal{V}',\mathcal{E}')$ with costs $c_{ij}$'s and travel times $t_{ij}$'s, the set of transportation requests $\mathcal{R}$, and the set of AVs $\mathcal{K}$. We define several variables for the problem. Binary variables $x_{ij}^k$'s are used to indicate which connections will be traversed by the vehicles, as
\begin{align*}
x_{ij}^k=\left\{
	\begin{array}{ll}
		1 & \text{if vehicle $k$ traverses $(i,j)$,}\\
		0 & \text{otherwise.}
	\end{array}
\right.
\end{align*}
We define binary variables $y_r^k$'s for the assignment of the vehicles to the requests, as
\begin{align*}
y_{r}^k=\left\{
	\begin{array}{ll}
		1 & \text{if vehicle $k$ is assigned to request $r$,}\\
		0 & \text{otherwise.}
	\end{array}
\right.
\end{align*}
For $i\in \tilde{\mathcal{V}}$, binary variables $g_i^k$'s are utilized to indicate the refuel stations at which the vehicles end their routes, as 
\begin{align*}
g_{i}^k=\left\{
	\begin{array}{ll}
		1 & \text{if vehicle $k$ ends its route at vertex $i\in \tilde{\mathcal{V}}$,}\\
		0 & \text{otherwise.}
	\end{array}
\right.
\end{align*}
We need to specify the times and occupancy conditions at various locations along the routes. Let $t_i^k$ be the time by which $k$ should arrive at vertex $i$ and $f_i^k$ be the number of passengers in $k$ right before it leaves  $i$. 


We aim to construct economical schedules for the AVs and thus we minimize the total operational cost with the objective function as
\begin{align}
\sum_{i,j\in \mathcal{V}, k\in \mathcal{K}} c_{ij}x_{ij}^k. \label{objective}
\end{align}
We define a set of constraints to confine the scope of the variables so that the requirements discussed in Section \ref{subsec:op} are satisfied.
Each transportation request can only be served once and thus we have
\begin{align}
\sum_{k\in\mathcal{K}}y_r^{k} = 1, \forall r\in\mathcal{R}. \label{requestassignment}
\end{align}
Each AV will end at one of the refuel stations if it is assigned to a request. This is specified by
\begin{align}
\sum_{i\in\tilde{\mathcal{V}}}g_i^{k} \leq 1, \forall k\in\mathcal{K}. \label{gasAssignment}
\end{align}
If AV $k$ is not assigned to any request, we do not need to determine a path for $k$ so as the final stopping refuel station for $k$. Thus it is possible to have $\sum_{i\in\tilde{\mathcal{V}}}g_i^{k}=0$ \textcolor{black}{for some $k$}.

Let $\mathcal{N}^+(i)$ and $\mathcal{N}^-(i)$ be the sets of incoming and outgoing neighbors of vertex $i$, i.e., $\mathcal{N}^+(i)=\{j\in\mathcal{V}'|(j,i)\in\mathcal{E}'\}$ and $\mathcal{N}^-(i)=\{j\in\mathcal{V}'|(i,j)\in\mathcal{E}'\}$. We model a path with a network flow model. A path starting at $a_k$ and ending at $i\in\tilde{\mathcal{V}}$ can be defined with the following: 
\begin{align}
0\leq \sum_{i\in \mathcal{N}^-(a_k)}x_{a_ki}^k - \sum_{i\in \mathcal{N}^+(a_k)}x_{ia_k}^k \leq \sum_r{y_r^k},\forall k\in \mathcal{K}, \label{startflow}\\
0\leq \sum_{j\in \mathcal{N}^+(i)}x_{ji}^k - \sum_{j\in \mathcal{N}^-(i)}x_{ij}^k \leq g_i^k,\forall i\in \tilde{\mathcal{V}}, k\in \mathcal{K}, \label{endflow}\\
\sum_{j\in \mathcal{N}^+(i)}x_{ji}^k = \sum_{j\in \mathcal{N}^-(i)}x_{ij}^k, \forall i\in\mathcal{V}'\setminus \tilde{\mathcal{V}}\cup\{a_k|k\in \mathcal{K}\}. \label{otherflow}
\end{align}
Eq. \eqref{startflow} defines for the starting vertex of $k$, where a starting vertex has one unit of net outgoing flow.
$\sum_r{y_r^k}$ specifies if a path needs to be defined for $k$. If there are no requests assigned to $k$, $\sum_r{y_r^k}$ becomes zero and $a_k$ is not the starting vertex of any paths \textcolor{black}{for $k$}.
Similarly, \eqref{endflow} defines for the destination vertex of $k$ and the exact vertex $i$ ended by $k$ is indicated by $g_i^k$. If $k$ ends at $i\in \tilde{\mathcal{V}}$, \eqref{endflow} will allow $i$ to have one unit of net incoming flow for $k$. 
For other vertices, \eqref{otherflow} sets the conversation of flow by equalizing the corresponding incoming and outgoing flows.

If request $r$ is assigned to vehicle $k$, $k$ needs to pass through the pickup location $s_r$ of $r$. It is equivalent to having positive outgoing flow for $k$ at $s_r$  as
\begin{align}
\sum_{i\in\mathcal{N}^-(s_r)}x_{s_ri}^k \geq y_r^k, \forall r\in\mathcal{R},k\in\mathcal{K}. \label{possourceflow}
\end{align}
Similarly, $k$ needs to pass through the dropoff point $d_r$ of request $r$ when $r$ is served by $k$. This requires positive incoming flow for $k$ at $d_r$ as
\begin{align}
\sum_{i\in\mathcal{N}^+(d_r)}x_{id_r}^k \geq y_r^k, \forall r\in\mathcal{R},k\in\mathcal{K}. \label{posdesflow}
\end{align}
Note that specifying incoming flow for $s_r$ is not sufficient as it is possible to have zero incoming flow when $k$ begins its path at $s_r$ exactly. Similarly, it is not sufficient to specify outgoing flow for $d_r$ as it is possible to have zero outgoing flow when $k$ ends its path at $d_r$.


No matter where vehicle $k$ goes, it cannot travel continuously longer than its \textcolor{black}{operational time} limit specified by $\tilde{T}_k$. Moreover, it needs to take at least $t_k^0$ in order to reach the initial vertex of its path. Hence we have
\begin{align}
 t_k^0\leq t_i^k \leq \tilde{T}_k, \forall i\in \mathcal{V}', k\in \mathcal{K}. \label{timescope}
\end{align}

Let $M$ be a sufficiently large positive number.
When vehicle $k$ traverses edge $(i,j)$, the time at $j$ should be larger than or equal to the time at $i$ together with the travel time on $(i,j)$, i.e., $t_{ij}$. This can be specified by
\begin{align}
	t_j^k \geq t_i^k + t_{ij} - M(1-x_{ij}^k), \forall k\in\mathcal{K}, i,j\in\mathcal{V}'. \label{updatet}
\end{align}

When vehicle $k$ is assigned to request $r$, the actual ride time to reach $d_r$ from $s_r$ should be no larger than the maximum ride time $T_r$ specified by $r$, i.e.,
\begin{align}
	t_{d_r}^k - t_{s_r}^k \leq T_r + M(1-y_r^k), \forall r\in \mathcal{R},k\in \mathcal{K}.
\end{align}

If request $r$ is served by vehicle $k$, $k$ should arrive at $s_r$ within the service starting time window $[e_r,l_r]$ specified by $r$. This can be expressed as
\begin{align}
e_r-M(1-y_r^k)\leq t_{s_r}^k\leq l_r+M(1-y_r^k),\forall r\in \mathcal{R},k\in \mathcal{K}. \label{window}
\end{align}

Passengers being served occupy seats and the capacity limits of all vehicles should be satisfied at all times. So we have
\begin{align}
0\leq f_i^k \leq Q_k, \forall i\in\mathcal{V}', k\in\mathcal{K}. \label{capCon}
\end{align}

At $a_k$, some passengers induced from $\tilde{\mathcal{R}}_k$ may get off $k$ and new passengers may get on $k$ from other requests. The occupancy conditions of the AVs at their initial vertices $a_k$'s are given by
\begin{align}
	f_{a_k}^k \geq \sum_{r|s_r=a_k} q_ry_r^k - \sum_{r|d_r=a_k} q_ry_r^k, \forall k\in\mathcal{K}. \label{sourceCap}
\end{align}
When $k$ traverses from $i$ to $j$ along $(i,j)$, vertex $j$ may be the pickup locations of some requests and dropoff locations of some other requests. The \textcolor{black}{relationship between the} occupancy \textcolor{black}{conditions} of AV $k$ at \textcolor{black}{$i$ and} $j$ \textcolor{black}{can be}  specified  as
\begin{align}
	f_{j}^k \geq f_{i}^k - M(1-x_{ij}^k)+ \sum_{r|s_r=a_k} q_ry_r^k - \sum_{r|d_r=a_k} q_ry_r^k, \\ \nonumber
	\forall i,j\in\mathcal{V}', k\in\mathcal{K}. \label{desCap}
\end{align}

When an AV reaches a refuel station, all requests assigned to it should have been settled and no passenger should be accompanied to the end of the route. This is described by
\begin{equation}
	f_i^k\leq M(1-g_i^k), \forall i\in \tilde{V},k\in \mathcal{K}.  \label{refuelCap}
\end{equation}


Recall that there are two kinds of requests which have already been assigned to the AVs before the current scheduling interval, i.e., $\mathcal{R}_k=\tilde{\mathcal{R}}_k\cup \overline{\mathcal{R}}_k$. As a (nearly) real-time application, with updated information, we may further improve the system performance by revising the already assigned requests. For those requests currently being served, e.g., $r\in\tilde{\mathcal{R}}_k$ with the passengers sitting in $k$, we can consider those $r$'s as ``new'' requests starting the service at the the starting node $a_k$ by setting $s_r=a_k$ and affirming $y_r^k=1$. As $k$ has been serving $r$ by following a previously determined schedule, we can update its $T_r$ by shortening the elapsed time. The service starting time window is no longer important and thus  we set $e_r=-\infty$ and $l_r=+\infty$. There is no change to $q_r$. For those requests $\overline{\mathcal{R}}_k$'s which have been previously assigned to $k$ but not yet been served, we may reschedule $r\in\overline{\mathcal{R}}_k$ with other AVs if it can result in lower cost. As the passengers do not concern about which vehicle would eventually provide the service, it may be more efficient  to re-allocate those $r$'s in $\overline{\mathcal{R}}_k$  to other more appropriate vehicles with lower operational cost. This enhances the flexibility of the system.
As a whole, the scheduling problem is defined as
\begin{problem}[Scheduling] \label{schedulingproblem}
\begin{align*}
\text{minimize}\quad 	& \eqref{objective}\\
\text{subject to}\quad & \eqref{requestassignment}-\eqref{refuelCap}\\
\text{over}\quad & x_{ij}^k\in\{0,1\}, y_r^k\in\{0,1\}, g_l^k\in\{0,1\}, t_i^k\in\mathbb{R}^+,\\
& f_i^k\in\mathbb{Z}^+,  \forall i,j\in \mathcal{V}', l\in \tilde{\mathcal{V}}, r\in\mathcal{R},k\in\mathcal{K}.
\end{align*}
\end{problem}
Problem \ref{schedulingproblem} has a linear objective function and linear equality and inequality constraints. Some of its variables are binary while the rest are real. Thus the scheduling problem is a mixed-integer linear program (MILP).
Although the preprocessing step discussed in Section \ref{subsec:preprocessing} helps simplify the problem, the numbers of variables and constraints also grow with the sizes of $\mathcal{R}$ and $\mathcal{K}$. 
\textcolor{black}{As} those invalid requests have been removed \textcolor{black}{by admission control (discussed in Section \ref{sec:admission})}, this MILP is always feasible and all requests must be served.
\textcolor{black}{As long as all $c_{ij}$'s are positive, the solution of Problem 1 does not result in zero cost and the schedule without serving any requests will never be a solution.}

\subsection{Complete Schedule Construction}
Since the vehicles are unmanned, we need to provide complete instructions about the paths and schedules so that they know \textcolor{black}{when and where they} should go in order to provide services to the customers.
Solving the MILP gives the solutions for $x_{ij}^k$'s, $y_r^k$'s, $b_k$'s , $t_i^k$'s, and $f_i^k$'s. As \textcolor{black}{being} binary variables, the results of  $x_{ij}^k$'s and $y_r^k$'s are \textcolor{black}{unambiguous}. The latter tells which vehicles are assigned to the requests. The former explains the route of each $k$ in $G'$ starting at $a_k$ and ending at one of the refuel stations. The paths determined in $G'$ in turn infer the corresponding complete routes in $G$. Recall that we have determined the shortest path from $i$ to $j$ in $G$ corresponding to the edge $(i,j)\in \mathcal{E}'$ . By inserting the shortest paths for every pair of adjacent vertices along the paths based on $G'$, the complete routes in $G$ can be derived accordingly.

Note that \eqref{timescope}--\eqref{window} define the scope of $t_i^k$'s in the form of inequality.
The resulting $t_i^k$'s make feasible time schedules but may not be specific enough \textcolor{black}{leading to ambiguity}. For example, if the arrival of $k$ at location $i$ at any moment in $[t',t'']$ is feasible, a reasonable way is to set $t_i^k=t'$ and this enhances the flexibility for  the later scheduling intervals. To construct the schedule of $k$, we examine the path computed from $x_{ij}^k$'s. For the first vertex, we set $t_{a_k}^k=t_k^0$. For any subsequent vertices, says from $i$ to $j$, we can add the travel time on edge $(i,j)$ to the settled time at $i$ to obtain the settled time at $j$, i.e., $t_j^k = t_i^k + t_{ij}$. If vertex $j$ induces a request, we need to fulfill its service starting time window and thus we have $t_j^k = \max\{t_i^k + t_{ij}, e_r\}$.

Similarly, \eqref{capCon}--\eqref{refuelCap} also confine the occupancies of the vehicles at various locations with inequalities. The exact seat conditions cannot be told from the resulting $f_i^k$'s. Usually, we only concern about the seat conditions at the customer pickup and dropoff points, i.e., $s_r$'s and $d_r$'s. We can examine the route computed from $x_{ij}^k$'s again and determine the occupancy conditions. For example, $k$ goes from $i$ to $j$ on $(i,j)$. If $j$ is the service starting location of request $r$, we add the number of seats required for $r$ to the occupancy of $k$ at $i$ to get its occupancy at $j$, i.e., $f_j^k = f_i^k + q_r$. If $j$ is a service destination location instead, we subtract the seats taken by $r$ from the occupancy of $k$ at $i$ to get its occupancy at $j$, as $f_j^k = f_i^k - q_r$.
\textcolor{black}{In this way, the complete schedules of the vehicles with duty assigned can be determined and the vehicles just need to follow the schedules to accomplish the services.}

\section{Admission Control}\label{sec:admission}

Recall that, in Section \ref{sec:scheduling}, all requests submitted for scheduling are assumed to be admissible and need to be served. In this section, we investigate \textcolor{black}{the admission control problem. We first formulate the problem and then study the variations in the presence of traffic congestion and no-show of passengers.}

\subsection{\textcolor{black}{Problem Formulation}}
\textcolor{black}{Admission control} is responsible for determining a set of requests suitable for scheduling. In other words, after admission control, we will produce a subset $\check{\mathcal{R}}\subset\mathcal{R}$ for subsequent scheduling, where $\mathcal{R}$ is the set of all available requests and $\check{\mathcal{R}}$ will be settled by appropriate AVs in scheduling. However, to judge if a particular request $r$ is admissible,  we need to check not only its feasibility but also its profitability, i.e.,  whether serving $r$ will induce a positive net profit. Determining the net profit from $r$ involves its induced cost, which is regulated through scheduling. Hence there is no clear precedence relationship between scheduling and admission control and these two processes should be considered simultaneously.

We can interpret the requests and AVs as the demand and supply of transportation services, respectively, and then the constraints of Problem \ref{schedulingproblem} define the scope of matching \textcolor{black}{between the demand and} supply. The constraints can be satisfied more easily with larger $\mathcal{K}$ and smaller $\mathcal{R}$. Practically, the size of $\mathcal{K}$ is generally fixed as the system would not suddenly employ more AVs into the fleet or many AVs become out of service all of a sudden. However, the requests submitted are absolutely external from the system; the system can neither forbid the customers from submitting requests nor modify the attributes in the requests to match the conditions of AVs. In fact, just a single inappropriate request (e.g., a request with very short tolerable ride time) can make Problem \ref{schedulingproblem} infeasible and the scheduling collapse. 
To avoid this, the system should perform admission control by screening out \textcolor{black}{any} inappropriate requests before undergoing the scheduling (see Fig. \ref{fig:operations}). Consider that entertaining a request results in revenue. Although the system cannot modify the submitted requests, it has the right to dismissing any requests by sacrificing the corresponding revenue. Admission control manipulates $\mathcal{R}$ with the following objectives: 
1) Produce a subset of requests $\check{\mathcal{R}}\subset\mathcal{R}$ so that the scheduling process can be performed, i.e., Problem \ref{schedulingproblem} is made feasible with $\check{\mathcal{R}}$;
2)  Maximize the profit incurred.

Consider that we admit $\check{\mathcal{R}}$ for scheduling with Problem \ref{schedulingproblem}, which \textcolor{black}{can be} re-written as
\begin{subequations}
\label{sim_scheduling}
\begin{align}
\text{minimize}\quad 	&\phi(\alpha) \\
\text{subject to}\quad 
& \alpha \in \mathcal{Z}(\check{\mathcal{R}}),
\end{align}
\end{subequations}
where $\alpha\triangleq \{x_{ij}^k\}\cup\{y_{r}^k\}\cup\{g_i^k\}\cup\{t_i^k\}\cup\{f_i^k\}$, $\phi(\alpha) \triangleq \sum_{i,j\in \mathcal{V}, k\in \mathcal{K}} c_{ij}x_{ij}^k$, and let $\mathcal{Z}(\check{\mathcal{R}})$ be the feasible region of Problem \ref{schedulingproblem} with respect to $\check{\mathcal{R}}$.
Let $\rho_r$ be the revenue made when admitting $r\in\mathcal{R}$ and define
\begin{align*}
z_{r}=\left\{
	\begin{array}{ll}
		1 & \text{if we admit $r\in\mathcal{R}$ for scheduling,}\\
		0 & \text{otherwise.}
	\end{array}
\right.
\end{align*}
We also define the admission function $\sigma(\mathcal{R},[z_r]_{r\in\mathcal{R}})$ which returns $\check{\mathcal{R}}\subset \mathcal{R}$ based on $z_r$ such that $r\in \check{\mathcal{R}}$ if $z_r=1$.
The total profit is the difference between the total revenue and total cost, i.e., $\sum_{r\in\mathcal{R}}\rho_r z_{r} - \phi(\alpha)$.
Then we formulate the admission control problem as
\begin{problem}[Admission Control] \label{acproblem}
\begin{subequations}
\label{admissioncontrol1}
\begin{align}
\text{maximize}\quad 	& \Phi(\mathcal{R},[z_r]_{r\in\mathcal{R}}) = \sum_{r\in\mathcal{R}}\rho_r z_{r} - \phi(\alpha) \label{bilevel_obj}\\
\text{subject to}\quad & \check{\mathcal{R}} = \sigma(\mathcal{R},[z_r]_{r\in\mathcal{R}}), \label{admission_func}\\
& z_r = 1, \forall r\in\mathcal{R}_k, k\in \mathcal{K}, \label{admittedReq}\\
& \alpha \in \arg\min \{\phi(\alpha):\alpha\in \mathcal{Z}(\check{\mathcal{R}})\}, \label{bilevel_lower}\\ 
\text{over}\quad & \alpha,\check{\mathcal{R}}\in\mathcal{R},z_r\in\{0,1\},\forall r\in\mathcal{R},
\end{align}
\end{subequations}
\end{problem}
where \eqref{admittedReq} ensures that those requests admitted in the previous operating intervals will still be admitted in the current interval.
We cast admission control as a bilevel optimization problem, which consists of an upper- and a lower-level optimization. $\Phi$ is the upper-level objective function with upper-level variables $\check{\mathcal{R}}$ and $z_r$'s. $\phi$ represents the lower-level objective function with lower-level variable $\alpha$. Eq. \eqref{bilevel_lower} is in fact \eqref{sim_scheduling}, and thus, we cast Problem \ref{schedulingproblem} as a constraint of Problem \ref{acproblem}.
The upper-level optimization is to manipulate the whole set of requests $\mathcal{R}$ and determine $\check{\mathcal{R}}$ such that $\check{\mathcal{R}}$ can maximize the total profit. The lower-level optimization is to schedule the AVs to serve the set of admissable requests $\check{\mathcal{R}}$ so that the retained cost is the lowest.
The two levels of optimization are inter-related; the upper level requires the result of the lower level, i.e., $\alpha$, in order to get $\check{\mathcal{R}}$, while the lower level needs the result from the upper level, i.e., $\check{\mathcal{R}}$, in order to output $\alpha$.
Note that if the upper level produces $\check{\mathcal{R}}$ which makes $\mathcal{Z}$ infeasible, the resulting $\alpha$ will return $+\infty$ for the objective function of \eqref{bilevel_lower}, which \textcolor{black}{will} in turn make the objective function \eqref{bilevel_obj} \textcolor{black}{retain} $-\infty$.

Bilevel optimization is in general difficult to solve. A bilevel problem  with a linear objective function and linear constraints is NP-hard \cite{bilevel}. As seen from \eqref{admissioncontrol1}, we are manipulating discrete variables in the problem. As classical methods for bilevel optimization usually assume smoothness or convexity \cite{bilevel_foundations}, those classical methods are not applicable to Problem \ref{acproblem}. As inspired by \cite{GABBP, practical_bilevel}, we decide to tackle the problem with an evolutionary heuristic approach. Evolutionary approaches are commonly applied to bilevel optimization problems in transport science. For example, in \cite{DE}, Differential Evolution (DE) is employed to address the optimal toll problem, which is about setting polls to control congestion, and the road network design problem, which determines the capacity enhancements of network facilities. In \cite{transitporiority},  GA is applied to the transit road space priority problem, which optimizes the system by reallocating the road space between private car and transit modes. We will design a GA-based algorithm to solve Problem \ref{acproblem}. Before \textcolor{black}{discussing the details of the algorithm}, we define admissibility and give some analytical results for Problem \ref{acproblem}, which can help design the algorithm in the next section.

\begin{definition}[Admissibility]
A set of requests $\check{\mathcal{R}}$ is admissible if \textcolor{black}{$[z_r]_{r\in\mathcal{R}}$ produces} $\check{\mathcal{R}}$, which results in finite profit, i.e., $\Phi(\mathcal{R},\textcolor{black}{[z_r]_{r\in\mathcal{R}}})>-\infty$.
\end{definition}
\begin{theorem} \label{thm:admissibility}
We have the following results for admissibility:
\begin{enumerate}
	\item Consider that a subset of requests $\check{\mathcal{R}}\subset \mathcal{R}$ are admissible. Let $\mathcal{P}(\check{\mathcal{R}})$ be the power set of $\check{\mathcal{R}}$. Any $\check{\mathcal{R}}'\in \mathcal{P}(\check{\mathcal{R}})$  is also admissible.
	\item For any singleton $\{r\} \subset \mathcal{R}$, if $\{r\}$ is not admissible, any superset $\check{\mathcal{R}}\supset \{r\}$ are also non-admissible.
	\item Consider subsets of requests, $\check{\mathcal{R}}_1,\check{\mathcal{R}}_2\subset \mathcal{R}$, and subsets of vehicles $\check{\mathcal{K}}_1,\check{\mathcal{K}}_2\subset\mathcal{K}$. Suppose $\check{\mathcal{K}}_1\cap \check{\mathcal{K}}_2=\emptyset$. If $\check{\mathcal{R}}_1$ and $\check{\mathcal{R}}_2$ are admissible by $\check{\mathcal{K}}_1$ and $\check{\mathcal{K}}_2$, respectively, then $\check{\mathcal{R}}_1 \cup \check{\mathcal{R}}_2$ are also admissible.
\end{enumerate}
\end{theorem}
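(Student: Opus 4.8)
The plan is to reduce the whole theorem to a single observation about the lower level \eqref{sim_scheduling}: a set $\check{\mathcal{R}}$ is admissible if and only if its feasible region $\mathcal{Z}(\check{\mathcal{R}})$ is non-empty. The revenue term $\sum_{r\in\mathcal{R}}\rho_r z_r$ in \eqref{bilevel_obj} is a finite sum of real numbers, and the cost $\phi(\alpha)=\sum_{i,j,k}c_{ij}x_{ij}^k$ is bounded above by $\sum_{i,j,k}c_{ij}$ since each $x_{ij}^k\in\{0,1\}$; moreover $\phi$ depends only on the finitely many binary variables $x_{ij}^k$, so whenever $\mathcal{Z}(\check{\mathcal{R}})\neq\emptyset$ the minimum in \eqref{bilevel_lower} is attained at a finite value and $\Phi$ is finite. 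Conversely, when $\mathcal{Z}(\check{\mathcal{R}})=\emptyset$ the convention stated after \eqref{admissioncontrol1} makes $\phi(\alpha)=+\infty$, hence $\Phi=-\infty$. Thus admissibility of $\check{\mathcal{R}}$ is exactly feasibility of Problem~\ref{schedulingproblem} on $\check{\mathcal{R}}$, and the three parts become purely structural statements about that feasibility.

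For part 1, I would start from any feasible $\alpha\in\mathcal{Z}(\check{\mathcal{R}})$ and build a feasible point for an arbitrary $\check{\mathcal{R}}'\in\mathcal{P}(\check{\mathcal{R}})$ by switching off the discarded requests: keep every $t_i^k$, set $y_r^k:=0$ for $r\in\check{\mathcal{R}}\setminus\check{\mathcal{R}}'$, keep $x_{ij}^k$ and $g_i^k$ for each vehicle that still retains at least one request, zero out the route and terminal indicator of any vehicle left with no requests, and recompute each $f_i^k$ by walking the retained route of its vehicle and counting only the passengers of $\check{\mathcal{R}}'$ still aboard. Then I would check the constraint families in turn: the equality \eqref{requestassignment} is untouched for $r\in\check{\mathcal{R}}'$; the path and flow constraints \eqref{startflow}--\eqref{posdesflow} are either unchanged or relaxed --- in particular the net-outflow bound $\leq\sum_r y_r^k$ in \eqref{startflow} drops to $\leq 0$ for a depleted vehicle, which is exactly why its route must be nulled; the big-$M$ time, ride-time and window constraints become slack wherever a $y_r^k$ or $x_{ij}^k$ has been set to zero; and the recomputed $f_i^k$ still lies in $[0,Q_k]$ because it is a sub-count of the original feasible occupancy. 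This yields $\check{\mathcal{R}}'$ admissible.

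Part 2 is then the contrapositive of part 1: were some $\check{\mathcal{R}}\supset\{r\}$ admissible, the singleton $\{r\}\in\mathcal{P}(\check{\mathcal{R}})$ would be admissible by part 1, contradicting the hypothesis. For part 3, by part 1 I may first assume $\check{\mathcal{R}}_1\cap\check{\mathcal{R}}_2=\emptyset$ (replace $\check{\mathcal{R}}_2$ by $\check{\mathcal{R}}_2\setminus\check{\mathcal{R}}_1$, still admissible by $\check{\mathcal{K}}_2$ and with the same union). Given feasible schedules for $(\check{\mathcal{R}}_1,\check{\mathcal{K}}_1)$ and $(\check{\mathcal{R}}_2,\check{\mathcal{K}}_2)$, I would glue them: each $k\in\check{\mathcal{K}}_1$ takes its variables from the first schedule, each $k\in\check{\mathcal{K}}_2$ from the second, and every other vehicle stays idle with the null route. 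Because $\check{\mathcal{K}}_1\cap\check{\mathcal{K}}_2=\emptyset$ no vehicle is given conflicting instructions, because $\check{\mathcal{R}}_1\cap\check{\mathcal{R}}_2=\emptyset$ every request of $\check{\mathcal{R}}_1\cup\check{\mathcal{R}}_2$ is served exactly once, and since every constraint of Problem~\ref{schedulingproblem} other than \eqref{requestassignment} is indexed per vehicle, each is inherited intact from the sub-schedule governing that vehicle; hence the glued point lies in $\mathcal{Z}(\check{\mathcal{R}}_1\cup\check{\mathcal{R}}_2)$. The main obstacle I anticipate is the case analysis in part 1 --- chiefly verifying that the occupancy variables can always be repaired after deleting requests, and that a vehicle stripped of all its requests can be handed the null route without violating the flow-conservation constraints \eqref{startflow}--\eqref{otherflow}. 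Once that feasibility-monotonicity picture is in place, parts 2 and 3 need no genuinely new idea.
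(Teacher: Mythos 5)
Your proposal is correct and follows essentially the same route as the paper: reduce admissibility to feasibility of the lower-level scheduling problem, then argue feasibility constructively for each part (delete requests and keep the schedule for part 1, take the contrapositive for part 2, glue vehicle-indexed schedules over disjoint vehicle sets for part 3). You are in fact more careful than the paper in one spot --- its proof of part 1 asserts the vehicle ``can still follow the path as if $r$ is present,'' which would violate the bound $\leq \sum_r y_r^k$ in \eqref{startflow} once a vehicle loses its last request, whereas you correctly null that vehicle's route and terminal indicator --- and your disjointification of $\check{\mathcal{R}}_1,\check{\mathcal{R}}_2$ in part 3 matches the paper's three-set decomposition in substance.
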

\begin{proof}
For Statement 1, 
Constraint \eqref{admission_func} defines $\check{\mathcal{R}}$, which is an input of Constraint \eqref{bilevel_lower}. It is sufficient to show that the removal of any $r\in \check{\mathcal{R}}$ will not make \eqref{bilevel_lower} infeasible if the participating AVs can serve all the requests in $\check{\mathcal{R}}$. Suppose that $r$ is removed from $\check{\mathcal{R}}$ and AV $k$ would have assigned to serve $r$ if $r$ had been admitted. $k$ can still follow the path as if $r$ is present. Hence \eqref{bilevel_lower}  is still feasible for $\check{\mathcal{R}}\setminus r$.

For Statement 2,
a non-admissible $r$ means that it is impossible to arrange an AV to entertain $r$. We will never be able to provide \textcolor{black}{services} to a set of requests containing $r$ as its component $r$ can never be served.

For Statement 3,
we can represent $\check{\mathcal{R}}_1 \cup \check{\mathcal{R}}_2$ by three non-overlapping sets $\check{\mathcal{R}}_1\setminus(\check{\mathcal{R}}_1 \cap \check{\mathcal{R}}_2)$, $\check{\mathcal{R}}_2\setminus(\check{\mathcal{R}}_1 \cap \check{\mathcal{R}}_2)$, and $\check{\mathcal{R}}_1 \cap \check{\mathcal{R}}_2$. Since $\check{\mathcal{K}}_1$ and $\check{\mathcal{K}}_2$ are mutually exclusive, $\check{\mathcal{R}}_1\setminus(\check{\mathcal{R}}_1 \cap \check{\mathcal{R}}_2)$ and $\check{\mathcal{R}}_2\setminus(\check{\mathcal{R}}_1 \cap \check{\mathcal{R}}_2)$ can be served by $\check{\mathcal{K}}_1$ and $\check{\mathcal{K}}_2$ simultaneously. Each $r\in \check{\mathcal{R}}_1 \cap \check{\mathcal{R}}_2$ can be admitted by either $k\in\check{\mathcal{K}}_1$ or $k\in \check{\mathcal{K}}_2$.
\end{proof}

\begin{lemma} \label{nonnegativesoln}
\textcolor{black}{The system will not make negative profit. That is, for}
any $\mathcal{R}$, Problem \ref{acproblem} must have at least one feasible solution whose objective function value is non-negative.
\end{lemma}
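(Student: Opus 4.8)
The plan is to prove the lemma constructively: exhibit one particular feasible point of Problem~\ref{acproblem} and bound its objective value from below by $0$. Since the optimal value of Problem~\ref{acproblem} is at least the value attained at any feasible point, this suffices.

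The candidate I would use is the ``serve only what is already committed'' solution: put $z_r=1$ for every $r\in\check{\mathcal{R}}_0:=\bigcup_{k\in\mathcal{K}}\mathcal{R}_k$ and $z_r=0$ otherwise, which satisfies \eqref{admission_func}--\eqref{admittedReq} by construction. For \eqref{bilevel_lower} I would show $\mathcal{Z}(\check{\mathcal{R}}_0)\neq\emptyset$ by writing down one explicit feasible schedule: each AV $k$ retains exactly its assigned set $\mathcal{R}_k=\tilde{\mathcal{R}}_k\cup\overline{\mathcal{R}}_k$ (the $\mathcal{R}_k$ are pairwise disjoint, each request having a unique prior assignment), drives the suffix of its previously committed route that still lies ahead of $a_k$, and stops at the refuel station it was already heading for. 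Checking that the induced $x_{ij}^k,y_r^k,g_i^k,t_i^k,f_i^k$ satisfy \eqref{requestassignment}--\eqref{refuelCap} is routine bookkeeping: a suffix of a valid route is still a valid $a_k$-to-$\tilde{\mathcal{V}}$ path; the time windows and ride-time bounds held for the previous schedule and, under the modelling assumptions, travel times barely change over one interval, while the re-anchoring of $r\in\tilde{\mathcal{R}}_k$ ($s_r=a_k$, $e_r=-\infty$, $l_r=+\infty$, $T_r$ reduced by the elapsed time) only relaxes constraints; and occupancies are inherited from the running schedule. Hence Problem~\ref{acproblem} is feasible, and $\check{\mathcal{R}}_0$ is admissible since it is served by $\mathcal{K}$.

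It remains to show this solution has non-negative objective, $\Phi=\sum_{r\in\check{\mathcal{R}}_0}\rho_r-\phi(\alpha^\star)\ge 0$, where $\alpha^\star$ is the cost-minimal schedule for $\check{\mathcal{R}}_0$. The trivial case is $\check{\mathcal{R}}_0=\emptyset$ (in particular the first operating interval): with no request to serve, \eqref{startflow}--\eqref{endflow} force zero net outflow at every $a_k$, the all-zero route is optimal, $\phi(\alpha^\star)=0$, and $\Phi=0$. For $\check{\mathcal{R}}_0\neq\emptyset$ I would combine $\phi(\alpha^\star)\le C_{\mathrm{cont}}$, where $C_{\mathrm{cont}}$ is the cost of the continuation schedule above, with an invariant maintained across intervals: at the interval in which each $r\in\check{\mathcal{R}}_0$ was first admitted, admission control delivered non-negative profit (this very lemma, applied there), so the revenues of the requests committed at that time already covered the cost of the route committed for them; since an AV only shortens its outstanding route as it advances, the leftover cost $C_{\mathrm{cont}}$ of the suffixes still to be driven is dominated by the revenue $\sum_{r\in\check{\mathcal{R}}_0}\rho_r$ still attached to those suffixes.

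The main obstacle is exactly this last step. Feasibility by route continuation and the empty case are immediate, but making the non-negativity argument rigorous needs a clean induction on operating intervals carrying the invariant ``the revenue of the currently committed requests is at least the optimal cost of serving them'', together with a precise statement of why the re-optimization \eqref{bilevel_lower} never returns a continuation strictly costlier than the committed one (so that an AV's outstanding route cost is non-increasing from interval to interval). Should one prefer to avoid the induction, the bound $\sum_{r\in\check{\mathcal{R}}_0}\rho_r\ge C_{\mathrm{cont}}$ follows at once from a standing assumption that every admissible request carries revenue at least its worst-case marginal insertion cost; I would flag which of these the paper intends.
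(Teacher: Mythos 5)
Your proposal follows essentially the same route as the paper: the paper likewise splits $\mathcal{R}$ into the previously admitted requests $\{\mathcal{R}_k\}$ and the newly received ones, sets $z_r=0$ for every new request so that they contribute zero revenue and zero cost (giving $\check{\mathcal{R}}=\emptyset$ on that part and $\phi(\alpha)=0$), and then asserts that the previously admitted requests yield non-negative profit because ``otherwise, we would not have admitted them at the first place.'' The cross-interval induction you flag as the main obstacle is precisely the step the paper dispatches in that single sentence, so your treatment is, if anything, more explicit than the published proof about what still needs to be carried as an invariant.
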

\begin{proof}
We separate $\mathcal{R}$ into the previously admitted and newly received requests, i.e., $\{\mathcal{R}_k\}$ and $\mathcal{R}\setminus \{\mathcal{R}_k\}$. 

For the newly received requests, we can always set $z_r=0,\forall r\in\mathcal{R}\setminus \{\mathcal{R}_k\}$. Then \eqref{admission_func} gives $\check{\mathcal{R}}=\emptyset$. Eq. \eqref{bilevel_lower} returns $\alpha$ with $\phi(\alpha)=0$ as no requests need to be served and thus no AVs have been used to provide service. Hence we have $\sum_{r\in\mathcal{R}\setminus \{\mathcal{R}_k\}}\rho_r z_{r} - \phi(\alpha)=0$.

For the previously admitted requests, since they are admitted in some previous admission control processes, they must incur non-negative profit when they were admitted as new requests before. Otherwise, we would not have admitted them at the first place.
\end{proof}
Lemma \ref{nonnegativesoln} implies that Problem \ref{acproblem} must be feasible.

\begin{theorem} \label{obj_improvement}
Consider two subsets of requests $\check{\mathcal{R}}$ and $\check{\mathcal{R}}'$ with $\check{\mathcal{R}}\subset \check{\mathcal{R}}'\subset \mathcal{R}$. If  both $\check{\mathcal{R}}$ and $\check{\mathcal{R}}'$ are admissible, then \textcolor{black}{$\check{\mathcal{R}}'$ will not be less profitable than $\check{\mathcal{R}}$, i.e., } $\sup \Phi(\check{\mathcal{R}},\{z_r|r\in \check{\mathcal{R}}\})\leq \sup \Phi(\check{\mathcal{R}'},\{z_r|r\in \check{\mathcal{R}}'\})$. 
  \end{theorem}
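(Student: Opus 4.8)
The plan is to exploit a simple monotonicity-by-restriction. Note that $\sup\Phi(\check{\mathcal{R}},\{z_r\mid r\in\check{\mathcal{R}}\})$ is the optimal value of the admission-control problem (Problem~\ref{acproblem}) in which the pool of available requests is $\check{\mathcal{R}}$ rather than all of $\mathcal{R}$, and likewise for $\check{\mathcal{R}}'$. Since $\check{\mathcal{R}}\subset\check{\mathcal{R}}'$, every admission decision that is feasible (and in particular optimal) for the smaller pool can be reproduced over the larger pool simply by declining all the extra requests in $\check{\mathcal{R}}'\setminus\check{\mathcal{R}}$, at no change in revenue and no change in scheduling. So the whole argument is: take an optimal decision for the $\check{\mathcal{R}}$-instance, lift it to a feasible decision for the $\check{\mathcal{R}}'$-instance with the same objective value, and conclude $\sup\Phi(\check{\mathcal{R}}',\cdot)\ge\sup\Phi(\check{\mathcal{R}},\cdot)$.

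First I would argue the supremum over the $\check{\mathcal{R}}$-pool is attained, so that ``fix an optimal decision'' is legitimate: the decision vector $[z_r]_{r\in\check{\mathcal{R}}}$ lives in the finite set $\{0,1\}^{|\check{\mathcal{R}}|}$; the all-ones vector is feasible because $\check{\mathcal{R}}$ is admissible; and for every feasible decision the produced admitted set $\mathcal{S}\subseteq\check{\mathcal{R}}$ is again admissible by Statement~1 of Theorem~\ref{thm:admissibility}, hence $\mathcal{Z}(\mathcal{S})\neq\emptyset$ and $\phi$ (which depends only on the binary route variables $x_{ij}^k$ and so takes finitely many values) attains a finite minimum over $\mathcal{Z}(\mathcal{S})$. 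Thus $\sup\Phi(\check{\mathcal{R}},\cdot)$ is a maximum over a nonempty finite set; fix a maximizer $[z^{\star}_r]_{r\in\check{\mathcal{R}}}$ with admitted set $\mathcal{S}=\sigma(\check{\mathcal{R}},[z^{\star}_r])$, optimal lower-level solution $\alpha^{\star}\in\mathcal{Z}(\mathcal{S})$, and profit $\Phi^{\star}=\sum_{r\in\check{\mathcal{R}}}\rho_r z^{\star}_r-\phi(\alpha^{\star})$.

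Next I would define the lifted decision $[\hat{z}_r]_{r\in\check{\mathcal{R}}'}$ by $\hat{z}_r=z^{\star}_r$ on $\check{\mathcal{R}}$ and $\hat{z}_r=0$ on $\check{\mathcal{R}}'\setminus\check{\mathcal{R}}$, and verify it is feasible for the $\check{\mathcal{R}}'$-instance: constraint \eqref{admission_func} still outputs $\mathcal{S}$ because the new entries are zero; constraint \eqref{admittedReq} still holds because every previously admitted request lies in $\check{\mathcal{R}}$ and already had $z^{\star}_r=1$; and the lower-level constraint \eqref{bilevel_lower} is satisfiable since $\mathcal{Z}(\mathcal{S})\neq\emptyset$, so $\alpha^{\star}$ may be reused. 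Evaluating the objective, the declined extra requests contribute nothing, giving $\Phi(\check{\mathcal{R}}',[\hat{z}_r])=\sum_{r\in\check{\mathcal{R}}}\rho_r z^{\star}_r-\phi(\alpha^{\star})=\Phi^{\star}$, hence $\sup\Phi(\check{\mathcal{R}}',\cdot)\ge\Phi^{\star}=\sup\Phi(\check{\mathcal{R}},\cdot)$.

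There is no deep obstacle here; the argument is essentially bookkeeping. The two points deserving care are (i) being precise that $\sup\Phi(\check{\mathcal{R}},\cdot)$ and $\sup\Phi(\check{\mathcal{R}}',\cdot)$ refer to two different instances of Problem~\ref{acproblem} obtained by substituting $\check{\mathcal{R}}$, resp.\ $\check{\mathcal{R}}'$, for $\mathcal{R}$, and (ii) checking that the lifted decision really is feasible for the larger instance---in particular that it still produces a schedulable admitted set (via the power-set closure in Theorem~\ref{thm:admissibility}, using admissibility of $\check{\mathcal{R}}$) and still respects the forced-admission constraint \eqref{admittedReq}. Establishing attainment of the supremum, as in the second paragraph, is the only mildly technical ingredient and is what makes the ``fix a maximizer and lift it'' step rigorous.
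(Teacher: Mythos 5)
Your proof is correct for the inequality as literally written, but it takes a genuinely different route from the paper's. The paper argues by contradiction: it splits $\check{\mathcal{R}}'$ into $\check{\mathcal{R}}$ and $\check{\mathcal{R}}'\setminus\check{\mathcal{R}}$, asserts that $\sup\Phi$ over $\check{\mathcal{R}}'$ equals the \emph{sum} of the suprema over the two pieces, and then invokes Lemma~\ref{nonnegativesoln} to conclude the second summand is non-negative. You instead lift an optimal decision for the smaller pool to the larger pool by declining every request in $\check{\mathcal{R}}'\setminus\check{\mathcal{R}}$, check feasibility of \eqref{admission_func}--\eqref{bilevel_lower}, and observe the objective is unchanged. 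This buys you two things the paper does not have: you avoid the additivity claim, which is the weakest step of the paper's argument (scheduling costs of disjoint request sets served by a shared fleet need not decompose exactly, so that equality is not justified), and you make the ``fix a maximizer'' step rigorous via the finiteness/attainment argument, which the paper omits entirely; you also do not need Lemma~\ref{nonnegativesoln} at all. One caveat worth flagging: the paper subsequently reads this theorem as ``entertaining more requests will not reduce the profit'' and uses it to justify a crossover operator that admits \emph{more} requests. Under that reading the right-hand side is the profit with $z_r=1$ for all $r\in\check{\mathcal{R}}'$, and your lifted decision, which sets $\hat{z}_r=0$ on $\check{\mathcal{R}}'\setminus\check{\mathcal{R}}$, is not a candidate for it; establishing that stronger comparison would still require the marginal non-negativity that Lemma~\ref{nonnegativesoln} is meant to supply. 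So you have proved the displayed inequality between the two pool-optima cleanly, but not the ``admit-everything'' monotonicity the paper later relies on.
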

\begin{proof}
Suppose $\sup \Phi(\check{\mathcal{R}},\{z_r|r\in \check{\mathcal{R}}\})> \sup \Phi(\check{\mathcal{R}'},\{z_r|r\in \check{\mathcal{R}}'\})$. We write $\check{\mathcal{R}}' = \check{\mathcal{R}}\cup (\check{\mathcal{R}}'\setminus \check{\mathcal{R}})$. Then we have 
\begin{align*}
&\sup \Phi(\check{\mathcal{R}'},\{z_r|r\in \check{\mathcal{R}'}\}) \\
&= \sup \Phi(\check{\mathcal{R}},\{z_r|r\in \check{\mathcal{R}}\})+ \sup \Phi(\check{\mathcal{R}'}\setminus \check{\mathcal{R}},\{z_r|r\in \check{\mathcal{R}'}\setminus \check{\mathcal{R}}\}).
\end{align*} 
By Lemma \ref{nonnegativesoln},  $\sup \Phi(\check{\mathcal{R}'}\setminus \check{\mathcal{R}},\{z_r|r\in \check{\mathcal{R}'}\setminus \check{\mathcal{R}}\})$ has a value larger than or equal to zero. This induces a contradiction.
\end{proof}
Theorem \ref{obj_improvement} implies that entertaining more requests will not reduce the amount of profit made.

\subsection{\textcolor{black}{Variations}}

\textcolor{black}{Here we investigate how traffic congestion and no-show of paasengers impact on admission control (and scheduling). Basically, we will see that under these circumstances, the proposed admission control and scheduling mechansims can still be applied but we may need some additional minor arrangements to handle various situations.}

\subsubsection{\textcolor{black}{Traffic Congestion}}
\textcolor{black}{Traffic congestion has direct impact on the travel time $t_{ij}$ for some $(i,j)\in\mathcal{E}$ and subsequently affects the admissibility of requests. Recall that the system operates in a fixed-interval basis and each interval generally lasts for a few minutes (see Section \ref{subsec:op}). We basically assume that, within an interval, the parameters, including the travel times, are constant or with very small changes such that the results of admission control completed for that interval are still valid. If the travel times are relatively fast changing, we need to shorten the duration of the intervals to make the assumption valid. On the other hand, if the travel times are slowly varying, we may lengthen the durations to reduce the computation burden. Hence, the duration of the operating intervals depends on the traffic conditions of the deployed service area.}

\textcolor{black}{Now consider that $t_{ij}$ in the current interval has been updated such that its value is different from that used in the previous interval. There are three cases for the possible influence: (i) $t_{ij}$ does not involve in $\mathcal{R}_k$ for all $k$; (ii) $t_{ij}$ involves in $\overline{\mathcal{R}}_k$ for some $k$; and (iii) $t_{ij}$ involves in $\tilde{\mathcal{R}}_k$ for some $k$. For Case (i), since $t_{ij}$ has not been used to serve any requests, its change does not affect the schedules of any AVs. Hence, nothing needs to be done solely based on $t_{ij}$. For Case (ii), although $t_{ij}$ has been used to determine the schedules of some AVs, the involved requests have not been served yet. We can simply consider these requests as newly submitted requests and perform admission control and scheduling with them again. For Case (iii), $t_{ij}$ affects those schedules which are being implemented by some AVs. In the subsequent intervals, the scheduling process will see if the road segment $(i,j)$ can be avoided by determining other shortest paths. If not, as the passengers are being served, it may not be appropriate to ask them to shift to other vehicles for their journeys and nothing can be done further operationally. However, we may compensate the passengers in the marketing perspective, e.g., by issuing cash coupons for future rides.}

\subsubsection{\textcolor{black}{No-show of Passengers}}
\textcolor{black}{No-show refers to the situation that some or all passengers of a paricular request are absent at the scheduled pickup time. If a passenger cannot arrive at the pickup location on time, this will be considered as no-show.  If some but not all passengers are absent, the schdule of the designated AV is unaffected but fewer seats are required. These unused seats can be released to serve other appropriate requests in the later intervals.  If all passengers are absent, the ``resources'' allocated to the request can be released in the subsequent intervals right after its original pickup time. This gives the AV more flexibility in time and occupancy to serve future requests. In the business perspective, there may exist some penalty policies to discourage such activities. }

\section{Genetic-Algorithm-Based Solution Method}\label{sec:algorithm}
In this section, we propose a solution method to tackle Problem \ref{acproblem}. We adopt a GA-based framework to structure the method. Some of its components are designed based on the analytical results discussed in Section \ref{sec:admission}.

\subsection{Working Principle of Evolutionary Algorithms}

Evolutionary Algorithms (EAs) refer to a class of optimization algorithms, whose designs are inspired by various natural phenomena. Examples include GA \cite{GA}, DE \cite{DE_book}, and Chemical Reaction Optimization (CRO) \cite{CRO}. Different EAs generally have similar working principles: An EA samples the solution space of the problem iteratively and tries to locate \textcolor{black}{a} global optimum after examining a limited \textcolor{black}{number} of candidate solutions in the solution space. 
In each iteration, with some operators, it generates a population of candidate solutions based on those obtained from the previous iterations and their corresponding objective function values. It tends to converge to the global optimum along the iterations and it terminates when a stopping criterion is matched. Different EAs have different designs of their  operators. For example, GA is designed based on the ideas of natural selection in \textcolor{black}{genetics} while CRO mimics the nature of chemical reaction processes. Unlike most of the traditional optimization approaches, EAs \textcolor{black}{require} the problem to be neither convex nor differentiable. In each algorithm run, they only need to sample a number of candidate solutions  and evaluate their solution qualities with the objective function. Hence, a search with an EA usually incurs many objective function calls.
\textcolor{black}{As discussed, EAs have been shown effective in solving bilevel optimization problems in transport science. We are going to adopt the well-established GA framework to facilitate the design of a method which can return good solutions for Problem \ref{acproblem} in a practical sense.}

\subsection{Distributed Scheduling} \label{subsec:distributed}

When an EA is employed to address Problem \ref{acproblem}, many candidate solutions will be generated. To evaluate the quality of a particular candidate solution, we need to compute \eqref{bilevel_obj} once, which also needs to examine \eqref{bilevel_lower} one time. In other words, a single run of EA requires to solve Problem \ref{schedulingproblem} many times. When the lower-level optimization is simple, the computational burden of solving it many times may still be acceptable. However, this is not the case for Problem \ref{schedulingproblem}, where the required \textcolor{black}{numbers of} variables and constraints grow exponentially with the \textcolor{black}{quantities} of transportation requests and serving AVs. This implies that we need to a more effective way to solve Problem \ref{schedulingproblem}, in order to tackle Problem \ref{acproblem}.

\textcolor{black}{Consider that} $\check{\mathcal{R}}_k\subset \mathcal{R}$ \textcolor{black}{is} the subset of requests assigned to vehicle $k$. 
Suppose that we know the distribution of the requests to the vehicles, i.e., $\check{\mathcal{R}}_k$ \textcolor{black}{for all $k$}.
Since each request is only served by one vehicle, we have $\check{\mathcal{R}}_k \cap \check{\mathcal{R}}_l = \emptyset$, for any $k,l\in \mathcal{K}$\textcolor{black}{, $k\neq l$}, and $\bigcup_{k\in\mathcal{K}}\check{\mathcal{R}_k} = \mathcal{R}$.
When given $\check{\mathcal{R}}_k$, we consider the following problem:
\begin{problem}[Scheduling Subproblem for vehicle $k$] \label{subproblem}
\begin{subequations}
\label{admissioncontrol}
\begin{align}
\text{maximize}\quad 	& \sum_{i,j\in \mathcal{V}'} c_{ij}\mathring{x}_{ij}^k\\
\text{subject to}\quad 
& \sum_{i\in\tilde{\mathcal{V}}}\mathring{g}_i^{k} \leq 1, \\ 
& 0\leq \sum_{i\in \mathcal{N}^-(a_k)}\mathring{x}_{a_ki}^k - \sum_{i\in \mathcal{N}^+(a_k)}\mathring{x}_{ia_k}^k \leq \sum_r{\mathring{y}_r^k},\\
& 0\leq \sum_{j\in \mathcal{N}^+(i)}\mathring{x}_{ji}^k - \sum_{j\in \mathcal{N}^-(i)}\mathring{x}_{ij}^k \leq \mathring{g}_i^k,\forall i\in \tilde{\mathcal{V}}, \\
& \sum_{j\in \mathcal{N}^+(i)}\mathring{x}_{ji}^k = \sum_{j\in \mathcal{N}^-(i)}\mathring{x}_{ij}^k, \forall i\in\mathcal{V}'\setminus \tilde{\mathcal{V}}\cup\{a_k\} \\
& \sum_{i\in\mathcal{N}^-(s_r)}\mathring{x}_{s_ri}^k \geq \mathring{y}_r^k, \forall r\in\check{\mathcal{R}}_k,\\
& \sum_{i\in\mathcal{N}^+(d_r)}\mathring{x}_{id_r}^k \geq \mathring{y}_r^k, \forall r\in\check{\mathcal{R}}_k,\\
& \mathring{t}_k^0\leq \mathring{t}_i^k \leq \tilde{T}_k, \forall i\in \mathcal{V}',\\
& \mathring{t}_j^k \geq \mathring{t}_i^k + \mathring{t}_{ij} - M(1-\mathring{x}_{ij}^k), \forall i,j\in\mathcal{V}'\\
& \mathring{t}_{d_r}^k - \mathring{t}_{s_r}^k \leq T_r + M(1-\mathring{y}_r^k), \forall r\in \check{\mathcal{R}}_k,\\
& e_r-M(1-\mathring{y}_r^k)\leq \mathring{t}_{s_r}^k\leq l_r+M(1-\mathring{y}_r^k),\forall r\in \check{\mathcal{R}}_k,\\
& 0\leq \mathring{f}_i^k \leq Q_k, \forall i\in\mathcal{V}',\\
& \mathring{f}_{a_k}^k \geq \sum_{r|s_r=a_k} q_r\mathring{y}_r^k - \sum_{r|d_r=a_k} q_r\mathring{y}_r^k,\\
& \mathring{f}_{j}^k \geq \mathring{f}_{i}^k - M(1-\mathring{x}_{ij}^k)+ \sum_{r|s_r=a_k, r\in \check{\mathcal{R}}_k} q_r\mathring{y}_r^k \nonumber\\
&\quad - \sum_{r|d_r=a_k, r\in \check{\mathcal{R}}_k} q_r\mathring{y}_r^k, \forall i,j\in\mathcal{V}',\\
& \mathring{f}_i^k\leq M(1-\mathring{g}_i^k), \forall i\in \tilde{V},\\
\text{over}\quad & \mathring{x}_{ij}^k\in\{0,1\}, \mathring{y}_r^k\in\{0,1\}, \mathring{g}_l^k\in\{0,1\}, \mathring{t}_i^k\in\mathbb{R}^+, \nonumber\\
& \mathring{f}_i^k\in\mathbb{Z}^+,  \forall i,j\in \mathcal{V}', l\in \tilde{\mathcal{V}}, r\in\check{\mathcal{R}}_k. 
\end{align}
\end{subequations}
\end{problem}
Solving Problem \ref{subproblem} only allows us to obtain the serving path, the schedule to reach various locations along the path, and the capacity conditions of vehicle $k$ for serving the requests indicated by $\check{\mathcal{R}}_k$. Problem \ref{subproblem} looks similar to Problem \ref{schedulingproblem} but indeed much simpler. It does not contain \eqref{requestassignment} and it manipulates fewer variables as those related to vehicles other than $k$ are not included. It also possesses fewer constraints because of fewer variables.

For simplicity, similar to \eqref{sim_scheduling}, we also write the solution, objective function, and the solution space of Problem \ref{subproblem} as $\mathring{\alpha}_k$, $\phi_k(\mathring{\alpha}_k)$ and $\mathcal{Z}_k$\textcolor{black}{, respectively.}

\begin{theorem} \label{distributedScheduling}
When given $\check{\mathcal{R}}_k\subset \mathcal{R}, \forall k\in\mathcal{K}$, such that $\check{\mathcal{R}}_k \cap \check{\mathcal{R}}_l = \emptyset$, for any $k,l\in \mathcal{K}$\textcolor{black}{, $k\neq l$}, and $\bigcup_{k\in\mathcal{K}}\check{\mathcal{R}}_k = \mathcal{R}$, solving Problem \ref{subproblem} for all $k\in\mathcal{K}$ is equivalent to solving Problem \ref{schedulingproblem}, i.e.,
\begin{align*}
	\inf_{\alpha\in\mathcal{Z}} \phi(\alpha) = \sum_{k\in\mathcal{K}} \inf_{\mathring{\alpha}_k\in\mathcal{Z}_k} \phi_k(\mathring{\alpha}_k),
\end{align*}
and $x_{ij}^k =\mathring{x}_{ij}^k$, $y_{r}^k =\mathring{y}_{r}^k$, $g_{l}^k =\mathring{g}_{l}^k$, $t_{i}^k =\mathring{t}_{i}^k$, and $f_{i}^k =\mathring{f}_{i}^k$, $\forall i,j\in \mathcal{V}', l\in \tilde{\mathcal{V}}, r\in\check{\mathcal{R}}_k, k\in\mathcal{K}$.
\end{theorem}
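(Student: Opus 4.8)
The plan is to exploit the fact that Problem~\ref{schedulingproblem} is \emph{almost} separable across vehicles. Its objective \eqref{objective} is already a sum of per-vehicle terms, $\phi(\alpha)=\sum_{k\in\mathcal{K}}\phi_k(\alpha_k)$, where $\alpha_k$ collects the variables carrying the superscript $k$; and the \emph{only} constraint that couples different vehicles is \eqref{requestassignment}, the requirement $\sum_{k}y_r^k=1$. Every other constraint \eqref{gasAssignment}--\eqref{refuelCap} involves the variables of a single vehicle $k$, and for a request $r$ it is either indexed by $r$ with the coefficient $y_r^k$ attached (so it becomes vacuous once $y_r^k=0$, the large $M$ absorbing any slack) or it does not mention $r$ at all. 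I would make this precise first, going down the list \eqref{gasAssignment}--\eqref{refuelCap} and tagging each constraint with the single vehicle index it belongs to.

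Given a partition $\{\check{\mathcal{R}}_k\}_{k\in\mathcal{K}}$ of $\mathcal{R}$, setting $y_r^k=1$ for $r\in\check{\mathcal{R}}_k$ and $y_r^k=0$ otherwise satisfies \eqref{requestassignment} automatically and decouples the feasible region: the residual constraints split into $|\mathcal{K}|$ independent blocks, block $k$ acting only on $\alpha_k$. I would then verify that block $k$, after the now-vacuous $r\notin\check{\mathcal{R}}_k$ constraints are discarded, is exactly the constraint system of Problem~\ref{subproblem} with $\check{\mathcal{R}}_k$ in the role of the assigned requests (the assigned requests being necessarily served, i.e.\ $\mathring y_r^k=1$ for $r\in\check{\mathcal{R}}_k$). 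Consequently the feasible region of Problem~\ref{schedulingproblem} restricted to this assignment is the Cartesian product $\prod_{k\in\mathcal{K}}\mathcal{Z}_k$, and since $\phi=\sum_k\phi_k$ with $\phi_k$ depending only on block $k$, minimization over a product set factorizes: $\inf\phi=\sum_k\inf_{\mathring\alpha_k\in\mathcal{Z}_k}\phi_k(\mathring\alpha_k)$ for that partition, a global minimizer being obtained by concatenating per-vehicle minimizers and conversely, which yields the stated identities $x_{ij}^k=\mathring x_{ij}^k$, $y_r^k=\mathring y_r^k$, and so on. To land the theorem with $\mathcal{Z}$ on the left taken as the full (unrestricted) feasible region, I would finally observe that the partition in question is the one induced by an optimal solution of Problem~\ref{schedulingproblem} --- equivalently, that among all valid partitions it minimizes $\sum_k\inf_{\mathcal{Z}_k}\phi_k$, any other partition being either infeasible for some vehicle or no cheaper.

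The routine-but-delicate part, and the main obstacle, is the constraint-by-constraint check in the second paragraph: for each of \eqref{gasAssignment}--\eqref{refuelCap} one must confirm both that it involves only one vehicle's variables and that, under the substitution of the partition's $y$-values, it reduces to the corresponding line of Problem~\ref{subproblem} neither strengthened nor weakened --- in particular that the flow-conservation constraints \eqref{startflow}--\eqref{otherflow}, the pickup/dropoff constraints \eqref{possourceflow}--\eqref{posdesflow}, and the capacity-propagation constraints \eqref{capCon}--\eqref{refuelCap} all specialize correctly, and that dropping the $r\notin\check{\mathcal{R}}_k$ copies loses nothing because each of them reads $(\text{term})\ge 0$ or is slackened by $M$. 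Beyond this, the only subtlety worth spelling out is the direction ``subproblem optima $\Rightarrow$ full optimum'': because the objective value of any feasible point of Problem~\ref{schedulingproblem} equals the sum over $k$ of the per-block objective values of its restriction, a point optimal in every block is optimal overall, and no feasibility is lost in reassembling since the blocks share no variables.
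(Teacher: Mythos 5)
Your proposal follows essentially the same route as the paper's proof: fix $y_r^k$ according to the given partition so that the single coupling constraint \eqref{requestassignment} is satisfied and can be dropped, observe that the objective and the remaining constraints \eqref{gasAssignment}--\eqref{refuelCap} then separate by vehicle into exactly the $|\mathcal{K}|$ instances of Problem~\ref{subproblem}, and conclude that the infimum factorizes over the product of the per-vehicle feasible sets. Your added care --- the constraint-by-constraint separability check and the remark that the left-hand infimum over the full $\mathcal{Z}$ matches the right-hand sum only for the partition induced by an optimal assignment --- is more explicit than the paper's one-paragraph argument, but it is the same decomposition, not a different proof.
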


\begin{proof}
When given such $\check{\mathcal{R}}_k\subset \mathcal{R}, \forall k\in\mathcal{K}$, we can construct $y_r^k, \forall r\in\mathcal{R},k\in\mathcal{K}$, such that  \eqref{requestassignment} holds. In this way, we can remove \eqref{requestassignment} from Problem \ref{schedulingproblem}. Without Constraint \eqref{requestassignment}, the objective function and the rest of the constraints of Problem \ref{schedulingproblem} become separable in terms of $k$: \eqref{objective} gives the sum of costs spent on the vehicles; Eqs. \eqref{gasAssignment}--\eqref{posdesflow} specify the paths traversed by the vehicles, each of which are independent; Eqs. \eqref{timescope}--\eqref{window} confine the time requirements at various locations along the vehicular paths; Eqs. \eqref{capCon}--\eqref{refuelCap} limit the passenger capacity conditions along the vehicular paths. If we group the terms of \eqref{objective} and the constraints \eqref{gasAssignment}--\eqref{refuelCap} for each $k$, we will have $|\mathcal{K}|$ problems, each of which is given by \eqref{admissioncontrol}.
\end{proof}
\textcolor{black}{Theorem \ref{distributedScheduling} states that when the assignment of requests to the vehicles is known, solving the $|\mathcal{K}|$ individual scheduling subproblems distributedly can retain the solution of the original scheduling problem. Note that this result is dedicatedly developed based on some characteristics of the problem formulations and it generally cannot be applied to the other scheduling problems. Unlike general distributed optimization \cite{para_dist,multiagent}, our result here does not require techniques like message-passing. As a result, the $|\mathcal{K}|$ subproblems can be solved by $|\mathcal{K}|$ computing units distributedly. Assuming that the vehicles are connected through advanced vehicular communication technologies at all times, an obvious option of the computing unit is the AV. Thus,}
by Theorem \ref{distributedScheduling}, if we can assign each vehicle with the requests it needs to serve, each vehicle can determine a feasible  path \textit{per se} to serve the assigned requests with the lowest cost by solving Problem \ref{subproblem} concurrently.
\textcolor{black}{However, when the communications between a particular AV and the control center are interrupted, the corresponding subproblem can be delegated to an unoccupied computing unit at the control center or even to the cloud instead. The computed scheduling result can be returned to the AV when its communications have been resumed.}



\subsection{Algorithmic Components}

Since GA is one of the most popular EAs, we adopt a GA-based design to address the admission control problem. GA generates a sequence of candidate solutions using operations inspired by natural evolution, e.g., inheritance, selection, crossover, and mutation. Here we introduce various algorithmic components before discussing the overall algorithmic design:

\subsubsection{Chromosome}
A chromosome specifies  a candidate solution of Problem \ref{acproblem}. While the lower-level optimization is handled by a standard MILP method, our GA approach is mainly used to handle the upper-level optimization. A chromosome is represented by a $1\times |\mathcal{R}|$ binary vector $z=[z_1,\ldots,z_r,\ldots,z_{|\mathcal{R}|}]$,  together with a vehicle assignment vector $\kappa=[\kappa_1,\ldots,\kappa_r,\ldots,\kappa_{|\mathcal{R}|}]$, where $\kappa_r$ represents the vehicle assigned to $r$ if $z_r$ is of unity. Note that the introduction of $\kappa$ is the trick to carry out distributed scheduling discussed in Section \ref{subsec:distributed}. Although $\kappa$ can be determined in \eqref{bilevel_lower} if we apply the original formulation of scheduling \eqref{sim_scheduling}, there is no harm in manipulating $\kappa$ together with $z$ in the chromosome level. This makes distributed scheduling feasible and the benefit of computation time saving will be clear in Section \ref{subsec:scheduling}.  
During the course of search, we maintain a population of $N_{pop}$ chromosomes.

\subsubsection{Fitness Evaluation}
We evaluate the fitness of each chromosome in a distributed manner. The fitness evaluation process is illustrated in Fig. \ref{fig:fitness} and it consists of five steps:
\begin{enumerate}
\item[(1)] Grouping requests in $\check{\mathcal{R}}_k$:
Each chromosome $i$ contains $z^i$ and $\kappa^i$. For those $r$'s with $z^i_r=1$, based on $\kappa^i$, at the control center, we can divide $\mathcal{R}$ into $|\mathcal{K}|$ groups, i.e., $\check{\mathcal{R}}_k, \forall k\in\mathcal{K}$.
\item[(2)] Request information distribution:
For each $k$, the control center transmits $\check{\mathcal{R}}_k$ to AV $k$\textcolor{black}{, e.g.,} via VANET.
\item[(3)] Distributed scheduling:
Modern vehicles are generally equipped with computers and thus each $k$ can solve the individual Problem \ref{subproblem} simultaneously with other vehicles. Those AVs with empty $\check{\mathcal{R}}_k$ assigned can skip the computation.
\item[(4)] Individual cost return:
The individual vehicles transmit the computed costs of scheduling to the control center\textcolor{black}{, e.g.,} via VANET.
\item[(5)] Fitness computation:
Based on Theorem \ref{distributedScheduling}, the cost associated to the chromosome is the sum of the objective function values of Problem \ref{subproblem} determined by the individual AVs, i.e., 
$\phi(\alpha) = \sum_{k\in\mathcal{K}}\phi_k(\mathring{\alpha}_k|\kappa_r=k)$. Then the fitness of the chromosome can be computed as $\Phi(z,\kappa) = \sum_{r\in\mathcal{R}}\rho_rz_r - \sum_{k\in\mathcal{K}}\phi_k(\mathring{\alpha}_k|\kappa_r=k)$.\footnote{By abuse of notation, we write $\Phi(z,\kappa)=\Phi(\mathcal{R},\textcolor{black}{[z_r]})$ to emphasize the structure of the chromosome.}
\end{enumerate}
Note that $\Phi(z,\kappa)$ becomes $-\infty$ if and only if any request $r$ with $z_r=1$ is non-admissible. The advantages of undergoing the above process are three-fold: 
\begin{enumerate}
\item[(i)] The computation time can be dramatically reduced. Among all the computation components in the algorithm, scheduling is the most computationally demanding. If each vehicle can compute their own schedules, all the individual scheduling subproblems can be solved simultaneously.
\item[(ii)] All entities need to manage the necessary data only. $\rho_r$ is the result of the deal between the customer and the control center. \textcolor{black}{With distributed scheduling, the} usage of $\rho_r$ is restricted to the control center and no vehicles are involved. Moreover, after a vehicle solves its scheduling subproblem, its computed schedule is stored in that vehicle only, but not the control center nor any other vehicles. 
\item[(iii)] The amount of communications keeps minimal. In each evaluation, the only data needed to be communicated between the control center and the vehicles are the requests assigned to the individual vehicles \textcolor{black}{(in Step 2)} and the computed scheduling costs \textcolor{black}{(in Step 4)}. The system does not require a sophisticated communication system to satisfy the communication requirements.
\end{enumerate}

\begin{figure}[!t]
\centering
\hspace{-0.3cm}
\includegraphics[width=2.7in]{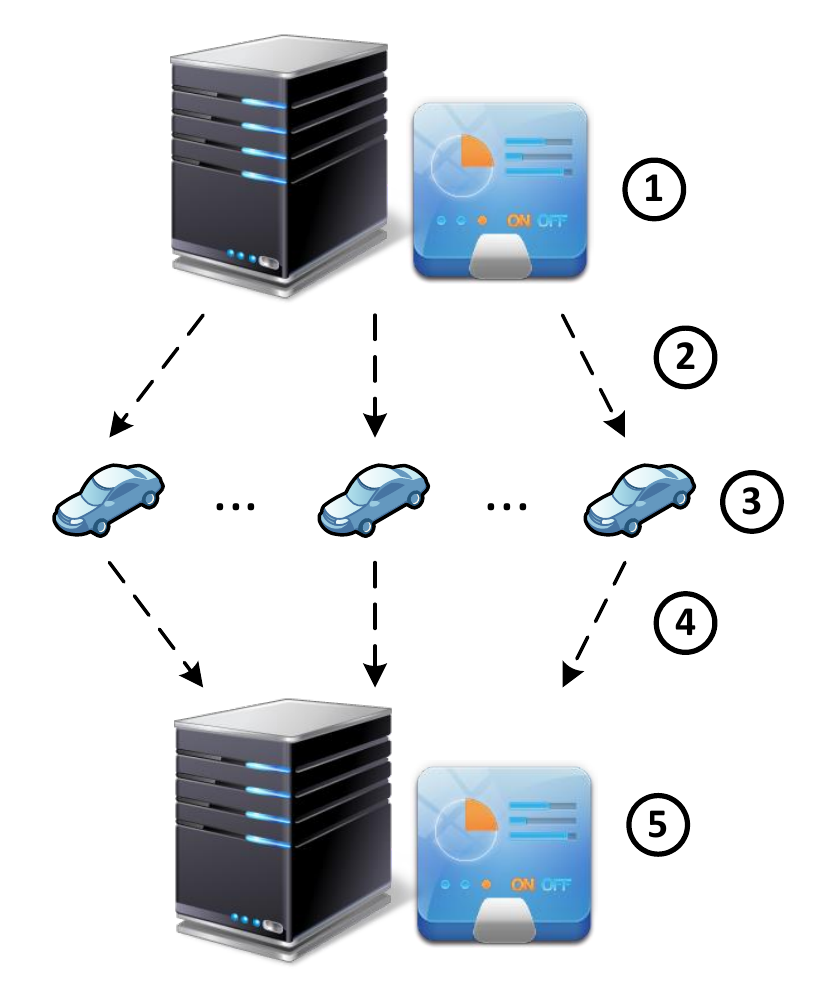} \vspace{-0.5cm}
\caption{Fitness evaluation process.} 
\label{fig:fitness}
\end{figure}

\subsubsection{Tabu List}

We construct a tabu list $\tau_r$ for each request $r$ to reduce the size of the search space. $\tau_r$ contains those vehicles $k$ which cannot serve $r$. As implied by Theorem \ref{thm:admissibility}, if a request $r$ is not admissible by $k$, any set of requests containing $r$ will also \textcolor{black}{not be} admissible by $k$. In other words, we will never need to consider those $k$ in $\tau_r$ when configuring $\kappa_r$. Unlike Tabu Search \cite{tabu}, we do not need to update the tabu lists during the course of search.\footnote{\textcolor{black}{As discussed in Section \ref{subsec:op}, admission control is completed in the duty assignment sub-interval once in each operating interval. Such sub-interval is short so that it is unlikely to have great changes to the positions of the AVs. Thus the tabu lists can be assumed to be static throughout the admission control process happened in each interval. However, the tabu lists may need to be updated in the next interval as the vehicles may have moved to other positions.}} $\tau_r$'s are only constructed in the initialization phase of the algorithm and utilized in \textcolor{black}{both} initial population generation and mutation.

\subsubsection{Selection}
In each generation, a fraction $X_{rate}$ of $N_{pop}$ survives and the rest of $(1-X_{rate})$ will be replaced by the children bled in the processes of crossover. We apply weighted random pairing \cite{PGA} to select the survived chromosomes to perform crossover.

\subsubsection{Crossover}
Crossover is an operator in GA to achieve intensification. In each operation, it manipulates two parent chromosomes to breed two offspring. The offspring inherit the merits from their parents and thus they tend to have better fitness values, i.e., higher objective function values of \eqref{bilevel_obj}. By Theorem \ref{obj_improvement}, a larger set of requests will improve the fitness. Also based on Statement 3 of Theorem \ref{thm:admissibility}, we manipulate the chromosomes with crossover as follows.
Parents $i$ and $j$ reproduce offspring $i'$ and $j'$. $i'$ admits all those $r$'s as $i$ does with the same set of vehicles. If there is any $k$ which is adopted in $j$ but not in $i$, we randomly adopt one such $k$ in $i$ on those $r$'s which are not admitted in its parent $i$.  We produce an offspring $j'$ dominantly inherited by the parent $j$ similarly. In this way, the offspring are likely to admit more requests resulting in higher fitness.

\subsubsection{Mutation}
Mutation exhibits diversification to prevent the algorithm from getting stuck in local optimums and we basically follow \cite{PGA} to design mutation. We control the amount of mutation with a mutation rate $\mu\in[0,1]$. 
We apply elitism to the chromosome with highest fitness in the population and only the rest undergo mutation. A mutation occurs on bit $z_r^i$ of chromosome $i$ and the number of mutations taken place in each generation is $\mu\times(N_{pop}-1)\times |\mathcal{R}|$. 
If we perform mutation on $z_r^i$, we toggle  $z_r^i$. If $z_r^i$ is changed from 0 to 1, we randomly assign $\kappa_r$ a $k$ which is not in the tabu list $\tau_r$. If $z_r^i$ is changed from 1 to 0, we set $\kappa_r=0$.
To further enhance diversification, besides the elite chromosome, each chromosome has a probability of $\gamma$ to be replaced by a random chromosome.

\subsection{Algorithmic Design}
We basically follow \cite{PGA} to design the algorithm, which consists of three stages: initialization, iterations, and the final stage. The flow chart of the algorithm is given in Fig. \ref{fig:flowchart}. We maintain the chromosomes with feasible candidate solutions during the whole course of search.

\subsubsection{Initialization}
In initiation, we define all the system parameters, e.g., $N_{pop}$ and $X_{rate}$, and \textcolor{black}{construct the tabu list $\tau_r$ for each $r$. Then we} create the initial population of chromosomes\textcolor{black}{, each of which} is assigned with one random request $r$ associated with a vehicle not in its tabu list $\tau_r$. This can ensure all chromosomes are initially feasible. We evaluate the fitness of the initial chromosomes before the iterations start. 
\subsubsection{Iterations}
In each iteration (or called generation), we manipulate the candidate solutions held by the chromosomes. Before any modification, we back up the feasible candidate solutions stemmed from the previous generation. Then we perform selection, crossover, and mutation to manipulate the chromosomes, followed by fitness evaluations. If any chromosome possesses an infeasible solution, we retain its original feasible one from the backup. We check the stopping criteria to see if we continue with the next iteration or proceed to the final stage. One commonly used stopping criterion is termination after undergoing a certain number of generations.
\subsubsection{Final Stage}
We output the best solution found in this stage.

\begin{figure}[!t]
\centering
\hspace{-0.3cm}
\includegraphics[width=3.5in]{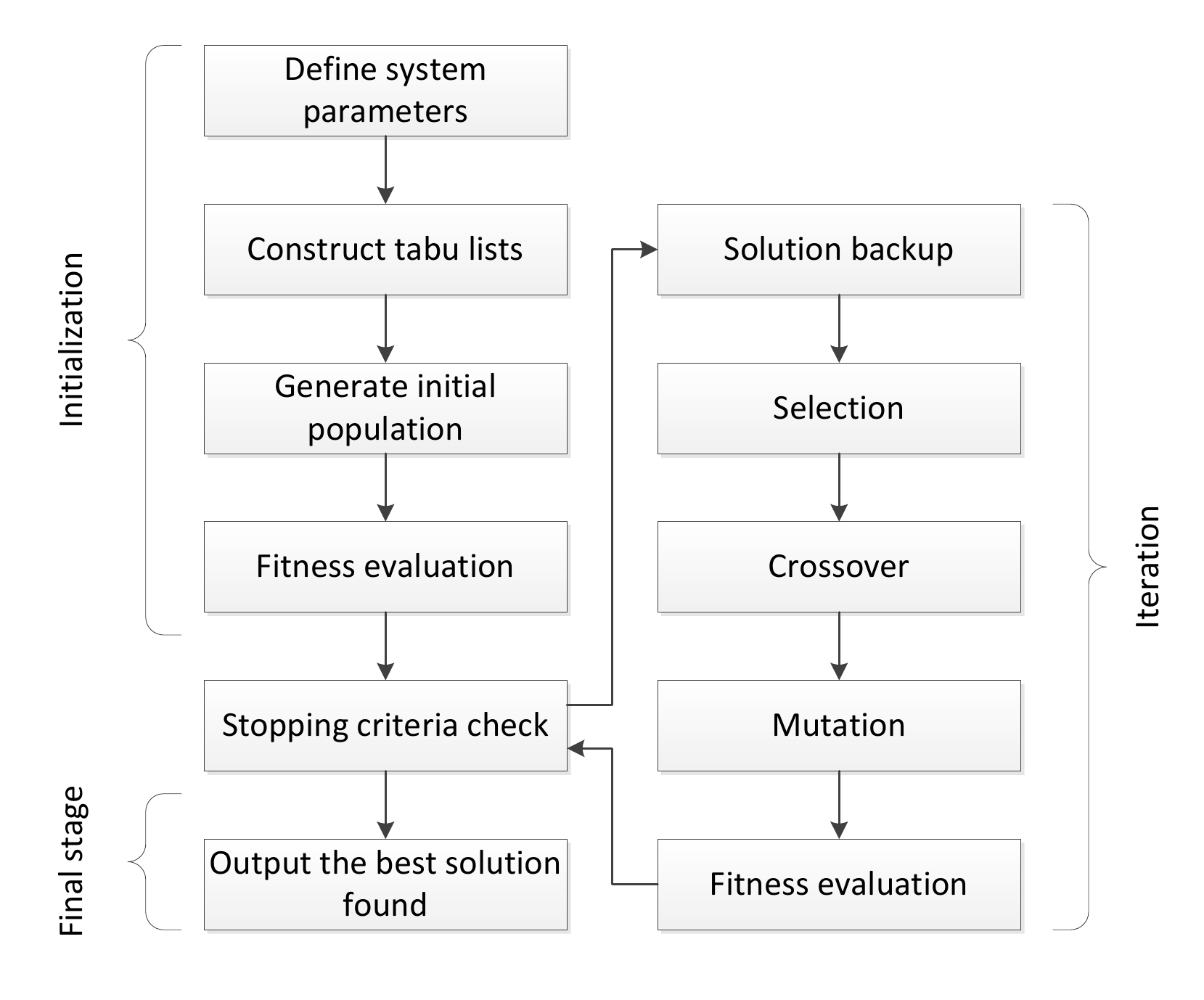} \vspace{-0.5cm}
\caption{Flow chart of the algorithm.} 
\label{fig:flowchart}
\end{figure}

In general, the solution method is implemented in a central manner at the control center. When evaluating the fitness of the chromosomes, the scheduling tasks are distributed to the vehicles \textcolor{black}{based on distributed scheduling}.

\section{Performance Evaluation}\label{sec:simulation}

We perform a series of simulations to evaluate different aspects of the algorithm. We \textcolor{black}{consider} a set of real taxi \textcolor{black}{service} data from \cite{boston}, containing the pickup and dropoff times, and pickup and dropoff locations of a number of taxi trips served in the City of Boston. We sample 100 trip data whose pickup times happened within a period of 30 minutes in a day of 2012 as the transportation request pool. \textcolor{black}{Since no existing transport can offer flexible shared-ride services as our system does, we} adopt the data for our system as follows: the earliest service starting time as the pickup time of the data, the latest service starting time as the pickup time plus 15 minutes, the maximum ride time as the actual trip time times 1.5, random seat occupancy in the range of $[1,5]$, and 50\% of the actual taxi fare \textcolor{black}{as the charges}. The driving distance and travel time between any two locations are determined through the Google Maps API. Based on \cite{fuelcost}, we assume that the fuel cost is 16 cents per mile. We select five gas stations in Boston as the refuel stations for AVs. Each vehicle is assumed to be equipped with five seats and we randomly place the vehicles in the city.

We perform the simulations on a computer with Intel Core i7-2600 CPU at 3.40 GHz and 32 GB of RAM. They are conducted in the MATLAB environment, where the scheduling problem is addressed with YALMIP \cite{yalmip} and CPLEX \cite{cplex}. We follow \cite{PGA} to set the GA parameters: $N_{pop}=16$, $X_{rate}=0.5$, and $\mu=0.15$, and we set $\gamma=0.5$.
Recall that, to operate the system for a period of time, we need to do admission control for each operating interval \textcolor{black}{within} the period. To perform admission control for an interval, we need to undergo a number of scheduling processes.
We try to evaluate the performance of the algorithm incrementally from the smallest module. First we evaluate the computation time for scheduling. In the second test, we evaluate the performance of the algorithm on solving the admission control problem. At last, we examine the profits made when the system operates continuously for a period of time.

\subsection{Computation Time for Scheduling} \label{subsec:scheduling}

\begin{figure}[!t]
\centering
\hspace{-0.3cm}
\includegraphics[width=3.5in]{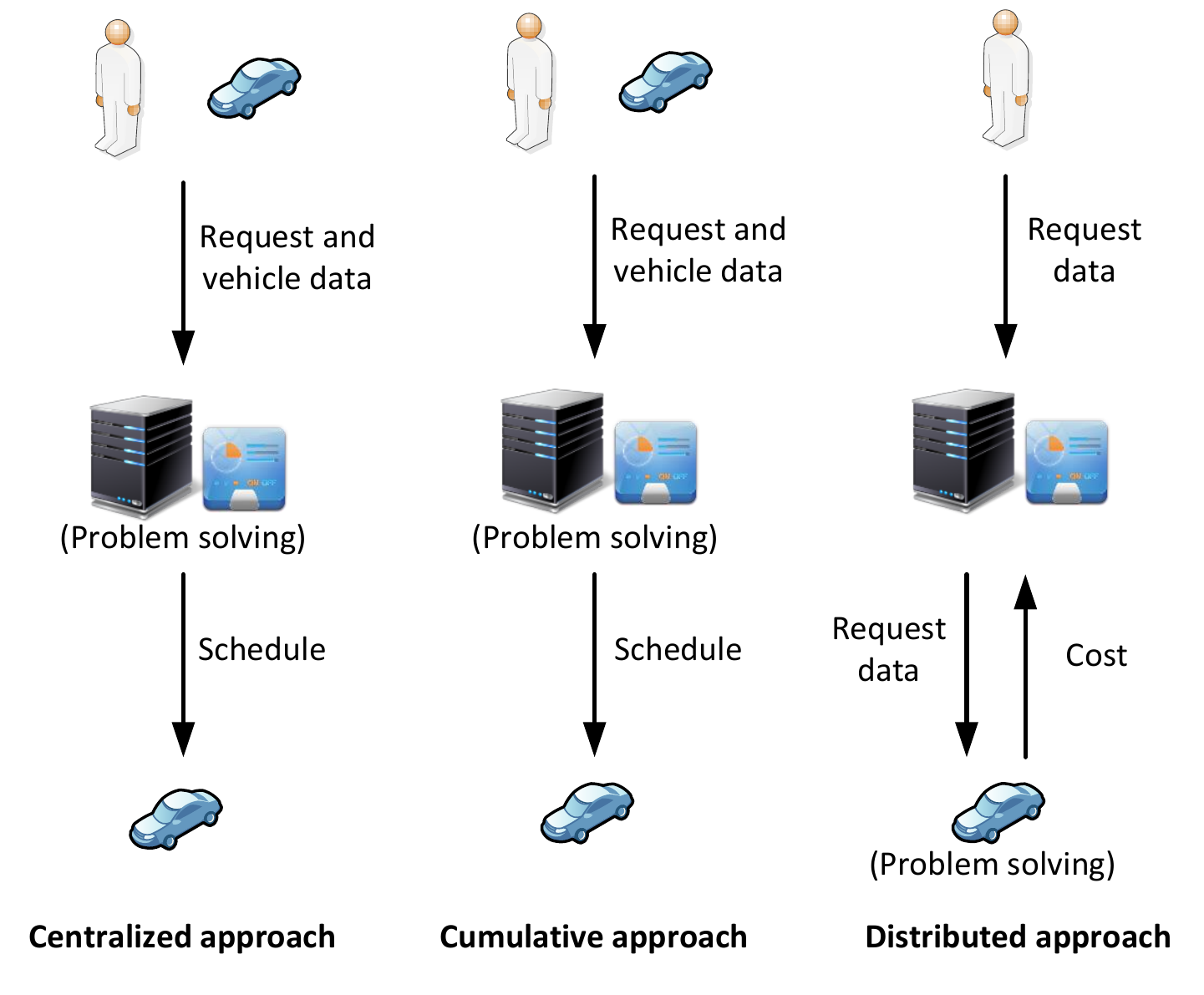} \vspace{-0.5cm}
\caption{Data processing, communications, and computation of the three approaches in scheduling.} 
\label{fig:comm}
\end{figure}

\begin{figure}[!t]
	\begin{center}
		\subfigure[3 requests]{\label{fig:time1}\includegraphics[width=3.5in]{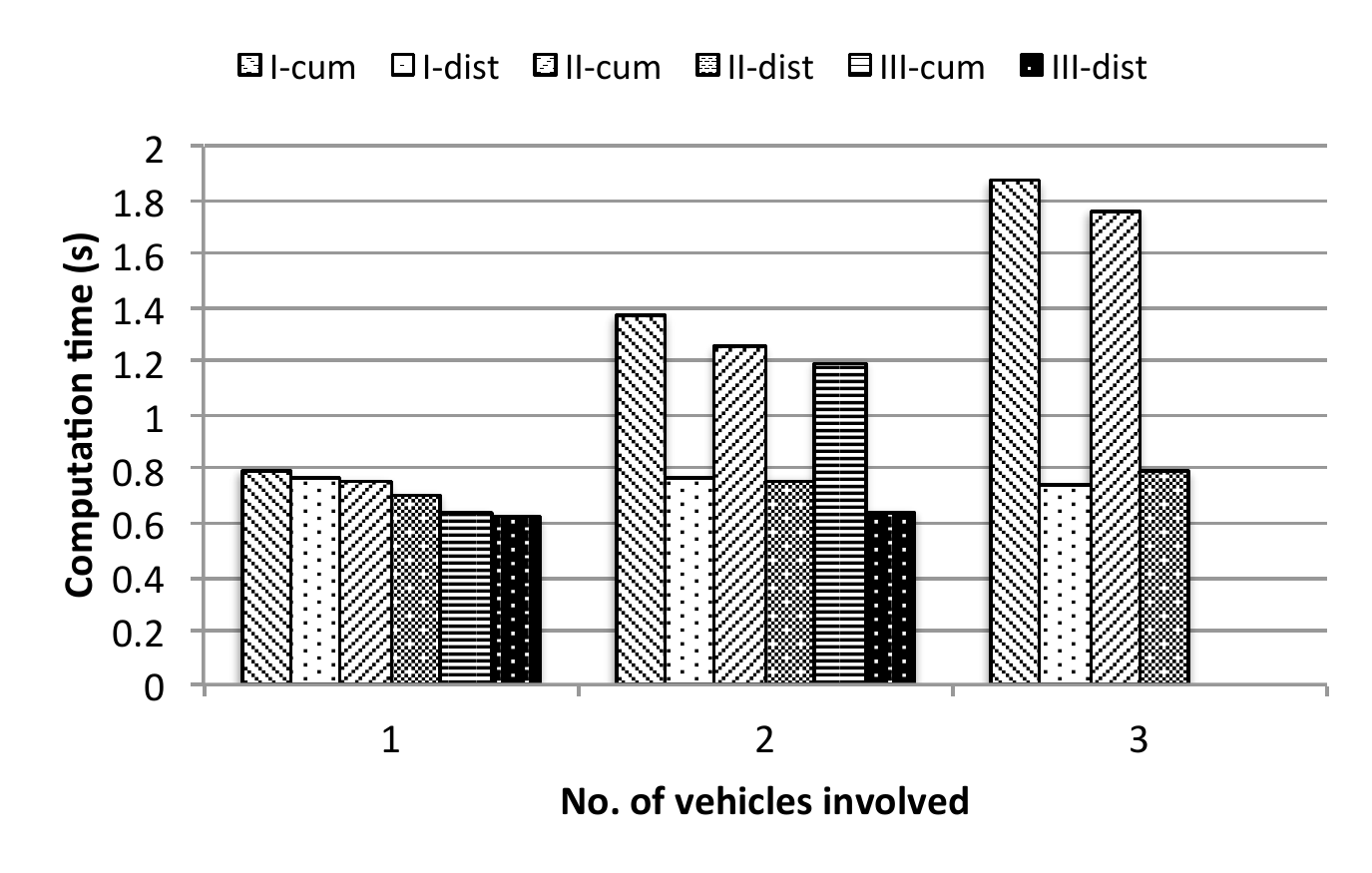}}
    	\subfigure[4 requests]{\label{fig:time2}\includegraphics[width=3.5in]{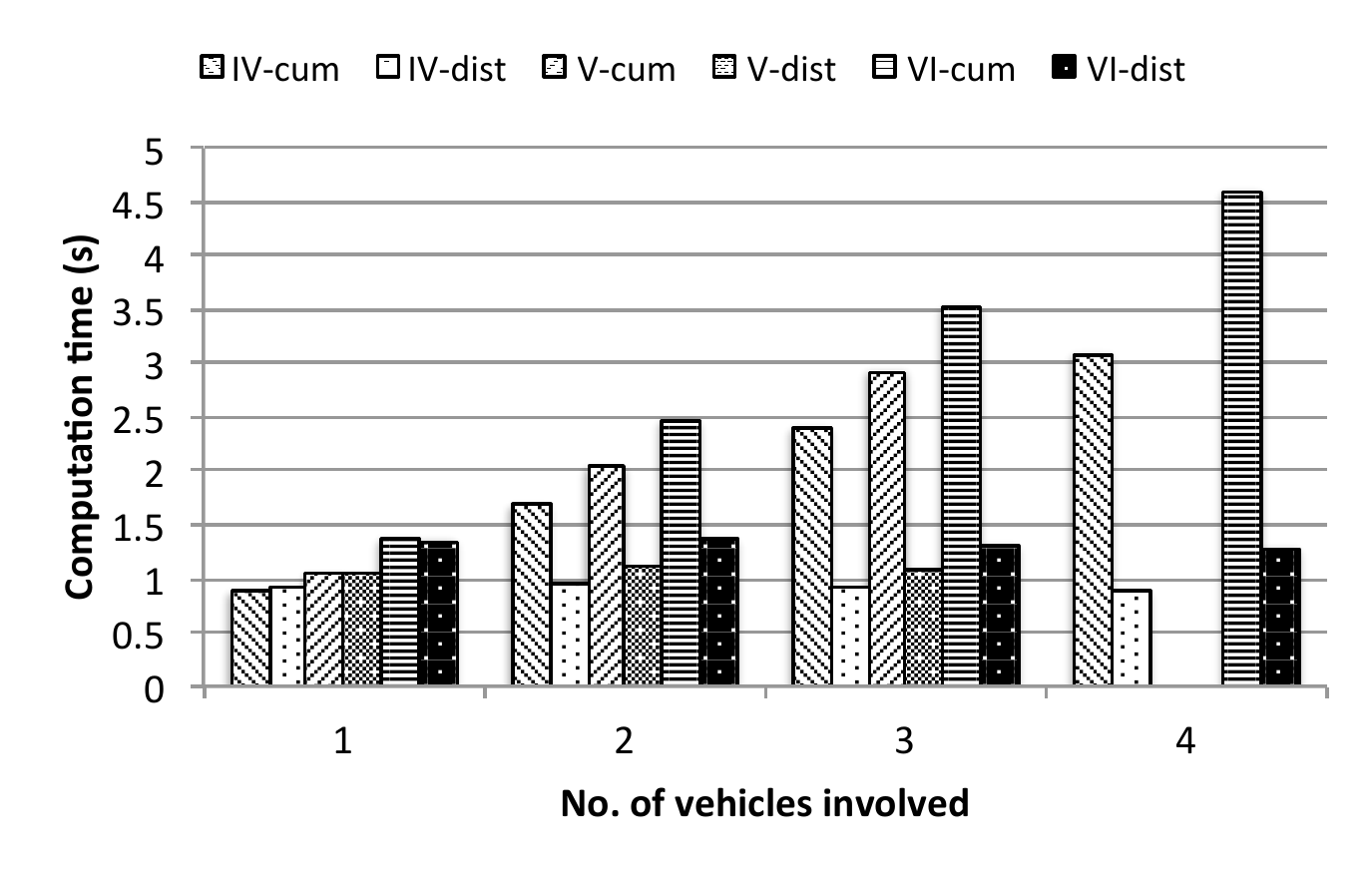}} 
    	\subfigure[5 requests]{\label{fig:time3}\includegraphics[width=3.5in]{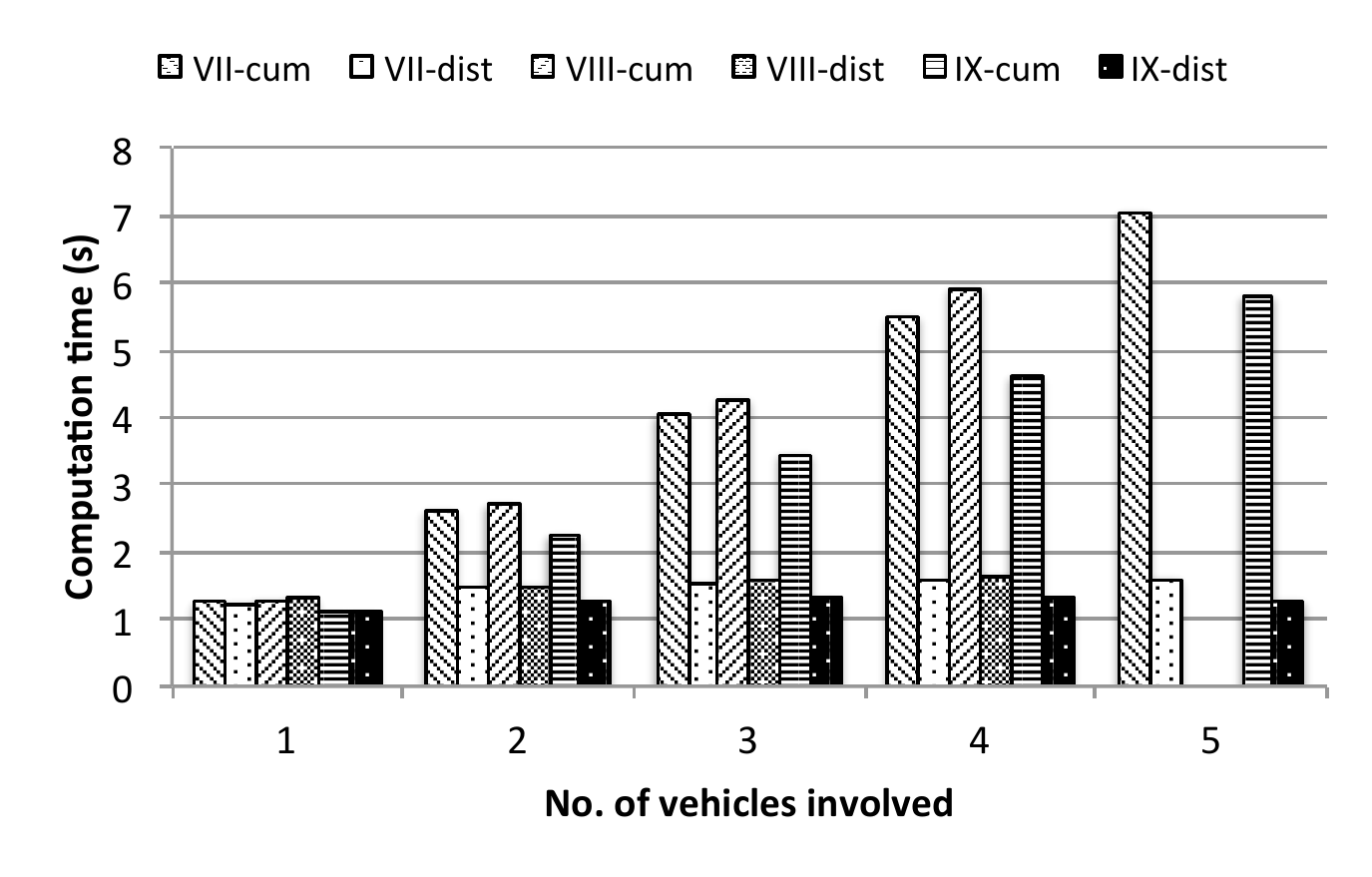}}
	\end{center}
	\caption{Computation times for scheduling.}
  \label{fig:comp_time}
\end{figure}

As Problem 1 is \textcolor{black}{an} MILP, we assume that CPLEX can return the optimal solution if the problem is tractable. So we focus on the computation time. 
When we look at Problem 1, the numbers of variables and constraints grow exponentially with the problem size in terms of the \textcolor{black}{quantities} of transportation requests and vehicles. Hence the computation time for scheduling grows very fast with the problem size. For demonstrative purposes, we focus on small problem instances. We randomly generate 9 cases from the Boston dataset: three cases with three requests, three with four requests, and three with five requests. All the cases are served with five vehicles.
Recall that we have two main ways to address the scheduling problem: (1) by solving Problem 1 as a whole and (2) by solving a number of Problem 3 collectively. For the latter, we can further arrange the subproblems \textcolor{black}{to be} solved (2.1) \textit{en masse} at the control center or (2.2) separately at the individual vehicles. Thus, there are three approaches in total and we call (1), (2.1), and (2.2) the centralized, cumulative, and distributed approaches, respectively. 
The data processing, communications, and computation of the three approaches are depicted in Fig. \ref{fig:comm}.
For the centralized and cumulative approaches, all data need to be collected and gathered at the control center from the passengers and vehicles for processing. After scheduling, the computed schedules will be \textcolor{black}{distributed} to the corresponding vehicles. For the distributed approach, the \textcolor{black}{vehicular} data are only maintained at the problem solving agents, i.e., that vehicles \textit{per se}, before and after the corresponding subproblems being solved. After scheduling, the resulting costs are transmitted back to the control center for the subsequent scheduling. When different numbers of vehicles are involved, the computation time can be noticeably different. To see this, for each of Cases I-IX, \textcolor{black}{we} examine all possible combinations of $z$ and $\kappa$ (i.e., candidate solutions for chromosomes) and check their computation times for scheduling. We consider the time spent on communications negligible as it is usually much smaller when compared with the computation time. Fig. \ref{fig:comp_time} shows the average computation times for feasible schedules with different numbers of vehicles involved in each case. Since the computation time \textcolor{black}{of the centralized approach} grows too fast (e.g.,  8.30 s, 69.25 s, and $6.72\times 10^3$ s for 3--5 requests, respectively), the time changes for the cumulative and distributed approaches would have become indistinguishable if the centralized data had also been \textcolor{black}{displayed}. For clearer representation, we skip the results for the centralized approach \textcolor{black}{in Fig. \ref{fig:comp_time}}. In Fig. \ref{fig:comp_time}, some bars are missing because no feasible schedule can be computed with particular numbers of vehicles involved. For example, one request in Case III cannot be scheduled with any vehicle, and thus, no results are shown for three vehicles for Case III. Generally, for the cumulative approach, the computation time grows linearly with the number of vehicles involved as more subproblems with similar size need to be solved. For the distributed approach, the computation times with different vehicle sizes are more or less similar because the involved subproblems can be handled at different vehicles simultaneously. While the computation time of the centralized approach grows exponentially with the number of requests, that of the cumulative approach increases at a much slower rate and that of the distributed approach is approximately steady. Hence, it is not feasible to adopt the centralized approach. If the vehicles have sufficient communication and computation capabilities, we suggest the distributed approach. Otherwise, we \textcolor{black}{can only} endorse the cumulative approach for scheduling.

\subsection{Admission Control in an Operating Interval}

\begin{figure*}[!t]
\centering
\hspace{-0.3cm}
\includegraphics[width=7.2in]{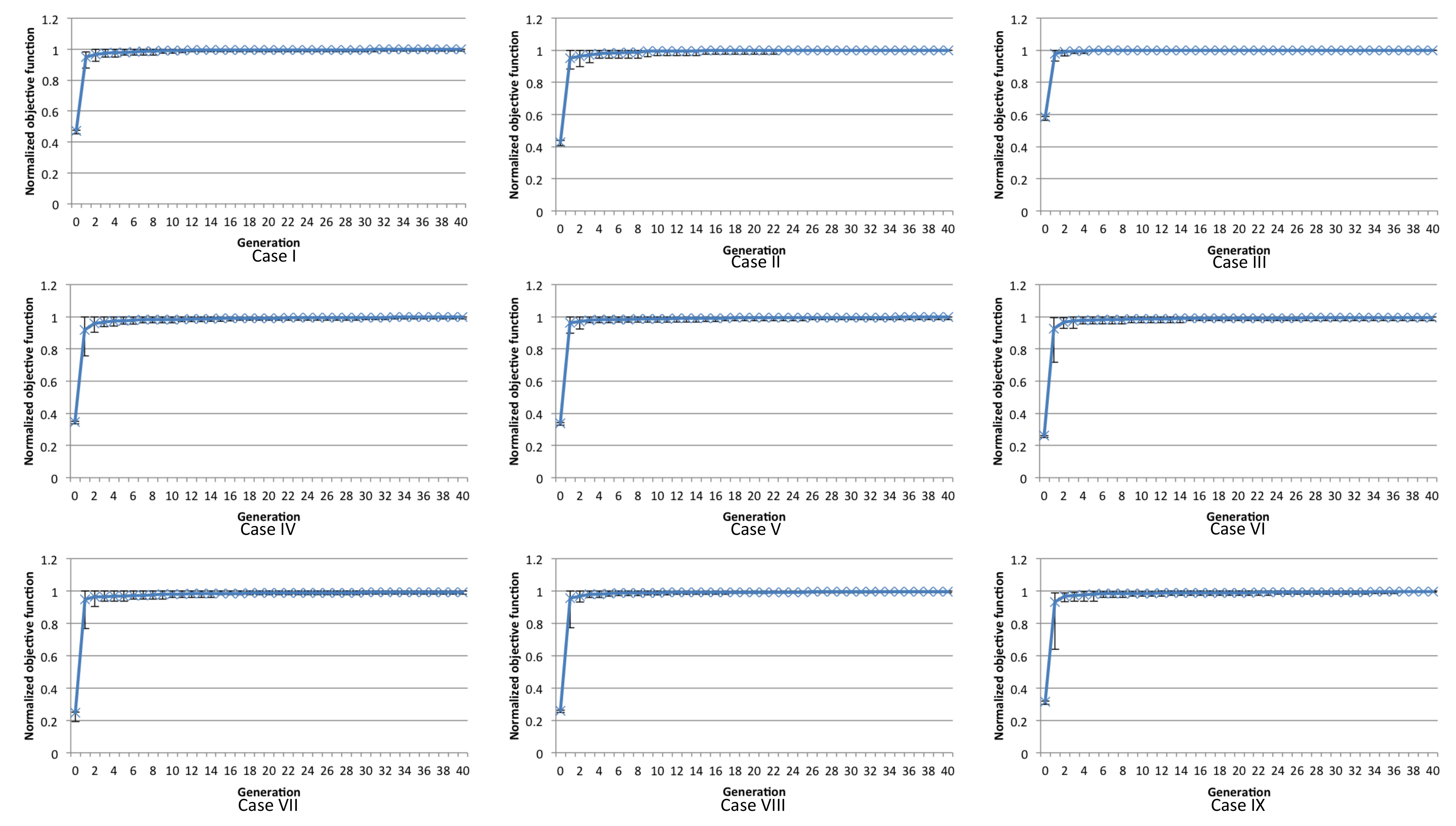} \vspace{-0.5cm}
\caption{Evolutions of the algorithm in solving admission control.} 
\label{fig:part2}
\end{figure*}

Next we investigate the performance of the algorithm \textcolor{black}{to address} admission control for an operating interval. Each fitness evaluation involves solving the scheduling problem once and the computation time for each fitness evaluation is dominated by that for scheduling. Moreover, the computation time of the algorithm depends on  the number of fitness evaluations needed. Since the population size is fixed in every generation, the run time of the algorithm can be estimated from the number of generations taken place and the results determined in Section \ref{subsec:scheduling}. Hence here we focus on the solution quality \textcolor{black}{instead}.

We run the algorithm for Cases I-IX. As we have examined all candidate solutions, we can acquire the optimal solutions of these cases. We repeat running the algorithm 20 times for each case. Fig. \ref{fig:part2} shows the average objective function value computed during the course of search for 40 generations. \textcolor{black}{As absolute values do not help reveal the performance of the algorithm,} the objective function values are \textcolor{black}{instead} normalized with the corresponding optimal values \textcolor{black}{to standardize the presentation}.\footnote{An optimal solution has the normalized objective function value equal to one.} For each data point, we also provide the error bars for the maximum and minimum values computed in the 20 repeats. The performance of the algorithm in each case is similar. The algorithm starts with relatively low quality solutions and then converges rapidly to the global optimal in a few generations. The gap between the error bars diminishes after more generations have been taken place and this further confirms the convergence of the algorithm. When the problem size increases, it takes slightly more generations to have the algorithm converged. We can conclude that our algorithm is very effective in solving the admission control problem.

\begin{figure}[!t]
\centering
\hspace{-0.3cm}
\includegraphics[width=3.5in]{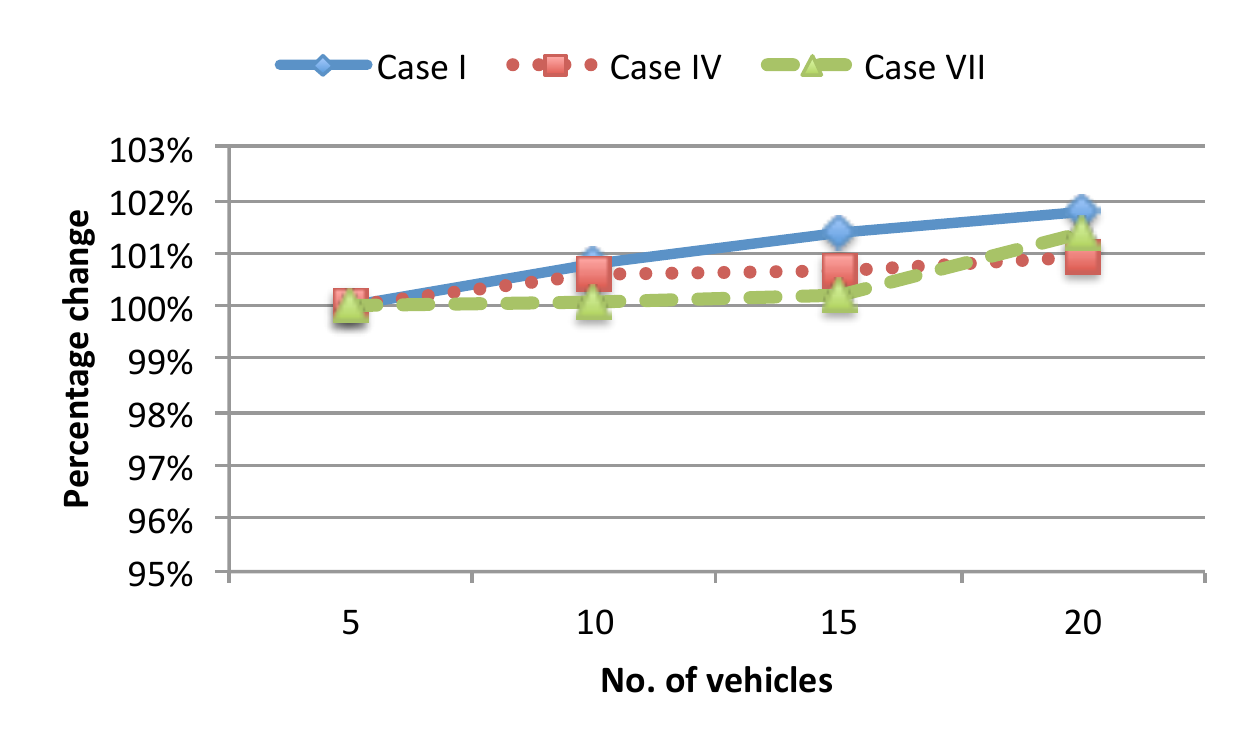} \vspace{-0.5cm}
\caption{Profits made with different numbers of vehicles.} 
\label{fig:vehtest}
\end{figure}

\textcolor{black}{We further investigate the total profits gained for the test cases with different AV population sizes. We perform the simulations with the same settings and repeat each test 20 times. Fig. \ref{fig:vehtest} shows the average results with respect to  5, 10, 15, and 20 vehicles. Since  the resultant profit highly depends on the parameters of the respective requests and vehicles, the total profits gained from different cases are not directly comparable. Instead for each case, we show the percentage change of profit by normalizing the results with the profit made with 5 AVs. Since all cases show similar trends, for clearer presentation, we give the results for Cases I, IV, and VII in Fig. \ref{fig:vehtest} only. In general, the more vehicles available, the higher profit can be made. However, the increase of profit is marginal; when compared with 5 AVs, the increase is just $1-2\%$ in the presence of 20 AVs. The reason is that more available vehicles may result in more economical routes but the total distance travelled would not be shortened significantly.}

\subsection{Admission Control in Consecutive Operating Intervals}

\begin{figure}[!t]
\centering
\hspace{-0.3cm}
\includegraphics[width=3.5in]{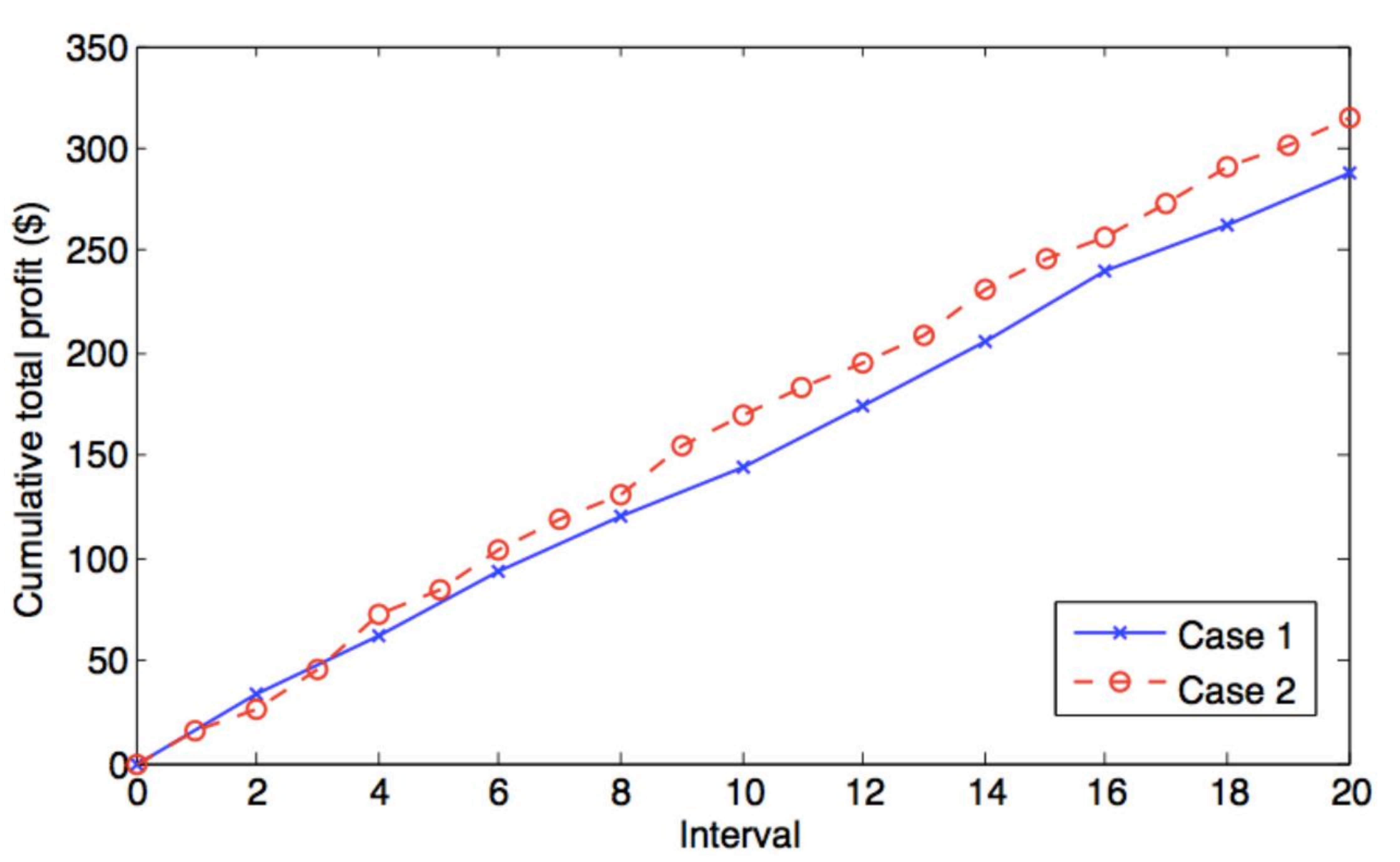} \vspace{-0.5cm}
\caption{Cumulative total profits.} 
\label{fig:profit}
\end{figure}

\begin{figure}[!t]
\centering
\hspace{-0.3cm}
\includegraphics[width=3.5in]{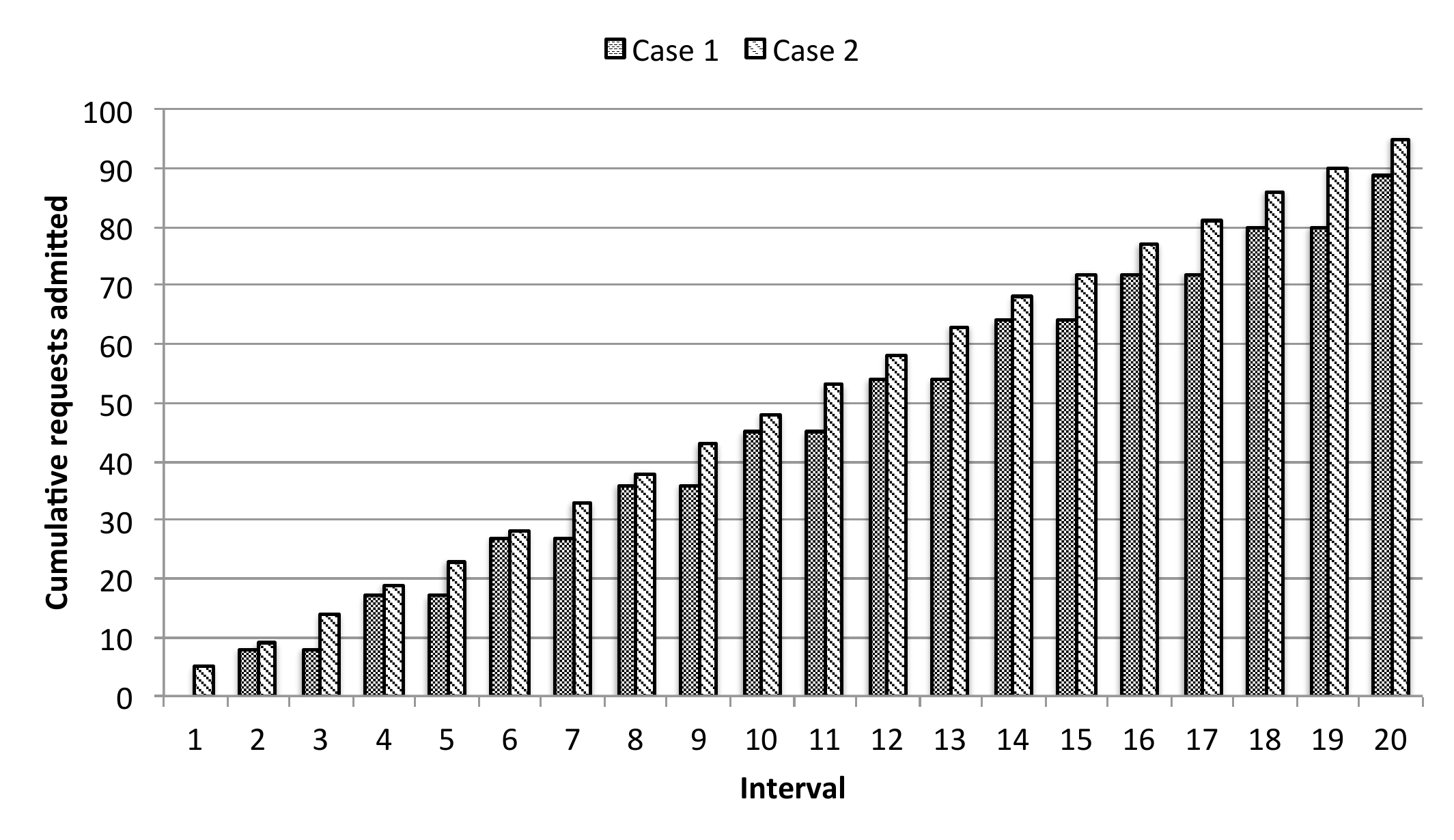} \vspace{-0.5cm}
\caption{Cumulative numbers of admitted requests.} 
\label{fig:reqadmit}
\end{figure}

Here we consider operating the system consecutively for a period of time to entertain the 100 requests in the transportation request pool. We consider two cases of different operating interval durations. In Case 1, there are 10 intervals, in each of which 10 random requests from the pool are to be scheduled. If a request is successfully admitted in an interval, it will be eliminated from the pool. Otherwise, it will be considered again in the subsequent intervals. The setting for Case 2 is similar but we consider total 20 intervals with 5 requests \textcolor{black}{being processed} in each interval. Five vehicles are arranged to serve the requests in both cases and we apply our algorithm to each interval for admission control. \textcolor{black}{In other words, we perform 10 and 20 admission control processes in Cases 1 and 2, respectively.}

Fig. \ref{fig:profit} shows the profit accumulated along the intervals\textcolor{black}{, in which we consider the duration of one interval for Case 1 is that of two intervals for Case 2}. Note that the cost is the actual expense on gas based on the traversed distance and the revenue gained from serving each request is the discounted result of having 50\% off from the real fare as if the request would be served by a normal taxi in Boston. The discount is used to compensate for the inconvenience of ride sharing and possibly longer ride time. This discount rate may be \textcolor{black}{already} attractive to many people to adopt our system instead of the normal taxi service. Hence the profit shown can be projected to a real business running in a similar scale. Fig. \ref{fig:reqadmit} provides the numbers of successfully admitted requests along the same interval horizon \textcolor{black}{as in Fig. \ref{fig:profit}}. We can see that Case 2 can produce more profit by \textcolor{black}{successfully} admitting more transportation requests. With the same number of vehicles in service, the smaller the number of requests to be scheduled in an interval, the higher the success rate of admission control is. In real situation, we normally cannot dramatically increase the size of the AV fleet and we would not intentionally reduce the number of AVs in service. On the other hand, it is much easier to adjust the number of requests to be scheduled each time by controlling the duration of each operating interval. In general, the shorter the interval, the smaller number of requests there are. Therefore, we would suggest to set the operating interval shorter, resulting in fewer requests \textcolor{black}{to be scheduled each time} and higher profits. Moreover, this will make the scheduling problem smaller by requiring shorter computation time to run the algorithm.

\section{Conclusion}\label{sec:conclusion}
With advancements in technologies, AVs become feasible and can run on the roads. \textcolor{black}{Various vehicular wireless communication technologies allow} AVs to be connected and respond cooperatively to instantaneous situations. This  constitutes a new form of public transport with high efficiency and flexibility. In this paper, we propose the AV public transportation system supporting point-to-point services with ride sharing capability. The system manages a fleet of AVs and accommodates a number of transportation requests. We focus on two major problems in the system: scheduling and admission control. The former is to configure the most economical schedules  and routes for the AVs in order to satisfy the admissible requests. 
The latter is to determine the set of admissible requests among all requests so as to produce maximum profit. We formulate the scheduling problem as \textcolor{black}{an} MILP. The admission control problem is cast as a bilevel optimization problem, in which the scheduling problem is set as a constraint. We propose a GA-based solution method to address admission control. We perform a series of simulations with a real taxi service dataset recorded in Boston and the simulation results show that our solution method is effective in solving the problem. 
\textcolor{black}{By shortening the operating intervals, the system can curtail the computation time required to solve the problem by limiting the quantity of the submitted requests and it can also produce higher profit cumulatively.}
To summarize, our contributions in this paper include: (i) designing the AV public transportation system, (ii) formulating the scheduling problem, (iii) developing distributed scheduling, (iv) formulating the admission control problem, (v) introducing the concept of admissibility and \textcolor{black}{deriving} the related analytical results, (vi) proposing an effective method to solve the admission control problem, and (vii) validating the performance of the solution method with real-world transportation service data.

\ifCLASSOPTIONcaptionsoff
  \newpage
\fi


\bibliographystyle{IEEEtran}
\bibliography{IEEEabrv}

\end{document}